\documentclass[a4paper,12pt]{article}

\usepackage[T1]{fontenc}
\usepackage[utf8]{inputenc}
\usepackage[english]{babel}
\usepackage{amsmath}
\usepackage{amssymb}
\usepackage{amsthm}
\usepackage{tikz}
\usepackage{placeins}
\usetikzlibrary{arrows.meta}
\usepackage[hidelinks]{hyperref}
\hypersetup{
  pdftitle={Integral Representations for Solutions of the Linearized Fourth Painleve Equation},
  pdfauthor={Oleg M. Kiselev},
  pdfsubject={Linearized Painleve IV equation and monodromy data},
  pdfkeywords={Painleve IV, linearized equation, Lax pair, squared eigenfunctions, monodromy data, Stokes multipliers}
}
\newtheorem{proposition}{Proposition}
\newtheorem{lemma}{Lemma}
\DeclareMathOperator{\diag}{diag}

\title{Integral Representations for Solutions of the\\
Linearized Fourth Painlev\'e Equation}
\author{Oleg M. Kiselev\\
\small Innopolis University, Innopolis, Russia\\
\small \texttt{o.kiselev@innopolis.ru}}
\date{}

\begin{document}

\maketitle

\begin{abstract}
We formulate the linearization of the fourth Painlev\'e equation as a
linearized Hamiltonian system and transform the scalar variational equation
to normal form, with the first-derivative term eliminated. From the canonical
fundamental solutions of the Jimbo--Miwa Lax pair, we construct quadratic
expressions $Q_{IV}$ and $R_{IV}$ that satisfy the identity
$\partial_x^2Q_{IV}-U_{IV}Q_{IV}=\partial_\lambda R_{IV}$.
Here $U_{IV}$ is the coefficient in the normal-form equation. A rapid-decay
cycle with a vanishing boundary term yields an integral solution of the
linearized equation. Every nontrivial such cycle has asymptotic branches in at
least two distinct Stokes sectors; otherwise the integral vanishes by Cauchy's
theorem. A second rapid-decay homology class gives another integral solution.
The Wronskian of the two integral solutions depends holomorphically on the
monodromy data, so a single nonzero value implies generic linear
independence. For one set of complex data, direct quadrature gives numerical
evidence of a nonzero Wronskian. We also derive
integral formulas for variations of the Stokes multipliers, the connection
matrix, and the local monodromy, and verify the contour representation by
substitution into the linearized equation.
\end{abstract}

\noindent\textbf{Keywords:} Painlev\'e IV equation; linearized equation;
Lax pair; squared eigenfunctions; rapid-decay cycles; monodromy data;
Stokes multipliers.

\section{Introduction}

Linearized equations describe the dependence of a nonlinear solution on
initial data, parameters, and perturbations. In singularly perturbed problems,
their solutions determine variations of amplitudes, phases, and other
asymptotic parameters. Explicit representations of linearized solutions are
therefore needed to derive connection formulas relating asymptotic regimes
before and after a dynamic bifurcation.

The Painlev\'e equations provide a natural class of models for this problem.
They arise both as reductions of integrable evolution equations and as
universal local models of critical transitions. An isomonodromic
representation relates a solution to its monodromy data and relates
variations of these data to solutions of the linearized equation.

An early asymptotic formulation of this problem is due to Haberman. He
identified the second Painlev\'e transcendent in the inner layer of a nonlinear
transition problem \cite{Haberman1977}. When a parameter passes slowly through
a critical value of an algebraic bifurcation, the local scale leads, depending
on the type of degeneracy, to the first or second Painlev\'e transcendent
\cite{Haberman1979}. Connection formulas between the regimes before and after
the bifurcation are thereby reduced to the analysis of distinguished
solutions of Painlev\'e equations.

The program of constructing nonlinear analogues of the special functions of
wave catastrophes was formulated in early work by Suleimanov
\cite{Suleimanov1992,SuleimanovHabibullin1993,Suleimanov1994}. Canonical
oscillatory integrals in linear theory describe universal profiles near
caustics. In nonlinear integrable problems, this role is played by special
solutions associated with isomonodromic systems and Painlev\'e equations. In
\cite{KiselevSuleimanov1999}, the asymptotics of solutions of all six
Painlev\'e equations were organized into a hierarchy of Hamiltonian systems.
Painlev\'e transcendents as nonlinear special functions were subsequently
discussed systematically in \cite{Clarkson2003}, and the program of nonlinear
analogues of wave-catastrophe functions was developed further in
\cite{Suleimanov2019}. Its extension to variational equations requires compatible explicit
representations for solutions of the linearized equations and for variations
of the monodromy data.

Such formulas are needed in dynamic-bifurcation and autoresonance problems.
For hard loss of stability in $P_{II}$, an algebraic asymptotic regime is
matched to a rapidly oscillatory one through an inner layer governed by $P_I$
\cite{KiselevHardLoss2001}. During slow passage through the nonhyperbolic
homoclinic orbit associated with a subcritical pitchfork bifurcation, one
inner regime is governed by $P_{II}$ \cite{Haberman2001}; $P_I$ plays an
analogous role for the unfolding of a saddle--center bifurcation
\cite{DiminnieHaberman2002}. In slow separatrix crossing and capture into
autoresonance, inner scalings lead to Painlev\'e equations and determine
connection formulas between asymptotic regimes
\cite{KiselevGlebov2003,KiselevGlebov2007,GlebovKiselevTarkhanov2010}.
Capture into resonance, separatrix scattering, the measure of captured
trajectories, phase-locking conditions, and the dependence of a bifurcation
boundary on a perturbation are expressed through parameters of asymptotic
families and their connection formulas
\cite{KiselevTarkhanov2014Chaos,KiselevTarkhanov2014JMP,
Kiselev2018ND,Kiselev2019RND,Kiselev2021Chaos}.

The derivative of a Painlev\'e transcendent with respect to a monodromy
parameter satisfies the linearized equation, while variations of the
monodromy data are expressed by integrals of quadratic combinations of
Lax-pair solutions. This leads to integral representations in terms of
squared-eigenfunction kernels. The monodromy formulation and the Riemann--Hilbert
method are presented in
\cite{ItsNovokshenov1986,FokasItsKapaevNovokshenov2006}.

For the second Painlev\'e equation, this scheme was implemented in
\cite{Kiselev2024RCD}. The corresponding formula is
\[
    v(x)=\int_\Gamma
    \bigl(\Psi_{11}^2+\Psi_{21}^2\bigr)(\lambda,x)\,d\lambda,
    \qquad
    \Gamma:\ \infty_6\longrightarrow\infty_1.
\]
Here $\Psi$ is a canonical fundamental solution of the $P_{II}$ Lax pair, while
$\infty_6$ and $\infty_1$ denote the lifted asymptotic directions of the two
contour branches lying in distinct Stokes sectors. If both tails lie in one
connected component of the decay region and the contour can be closed by an
arc at infinity within a domain where the integrand is analytic, Cauchy's
theorem makes the integral vanish. Separate deformations of the tails make
the boundary values of the total $\lambda$-derivative vanish.

\paragraph{Problem statement.}
The purpose of this paper is to construct integral solutions of the
variational equation along a fixed $P_{IV}$ solution and to express
variations of the monodromy data. We take the canonical fundamental
solutions of the associated linear system in $\lambda$ as given functions
and use their quadratic combinations as integral kernels.
The spectral system has an irregular singularity of Poincar\'e rank two at
$\lambda=\infty$ and a Fuchsian singularity at $\lambda=0$.
We use Volterra representations of these canonical solutions with one
asymptotic integration limit at infinity. Solutions of the linearized
equation are represented by integrals over rapid-decay chains on
the universal cover of
$\mathbb C\setminus\{0\}$.

\paragraph{Main result.}
Let $(q(x),p(x))$ be a holomorphic solution of the Hamiltonian $P_{IV}$
system on a simply connected domain, with $q(x)\ne0$ throughout that domain.
Fix a branch of $\sqrt q$ there, set $h=2p-x-q/2$, and take any column
$\psi=(\psi_1,\psi_2)^T$ of a solution of the Jimbo--Miwa Lax pair. Then
\[
    \partial_x^2Q_{IV}=U_{IV}(x)Q_{IV}+\partial_\lambda R_{IV},
\]
where $\alpha$ and $\beta$ are the parameters of the scalar $P_{IV}$ equation,
and
\[
\begin{aligned}
    Q_{IV}
    &=\frac{\sqrt q}{2\lambda}
      \bigl((h-\lambda)\psi_1^2-\psi_1\psi_2\bigr),\\
    R_{IV}
    &=-\sqrt q\left(\lambda+2x+\frac{3q}{2}\right)\psi_1^2,\\
    U_{IV}
    &=x^2-\alpha+6xq+\frac{15}{4}q^2
      -\frac{3\beta}{2q^2}.
\end{aligned}
\]
Let $\Gamma$ be the rapid-decay cycle on the universal cover of
$\mathbb C\setminus\{0\}$ defined in Section~6. Its homology class is fixed
as $x$ varies, its support meets at least two components of the decay region,
and
\[
    R_{IV}\big|_{\partial\Gamma}=0.
\]
Then
\[
    y(x)=\int_\Gamma Q_{IV}(\lambda,x)\,d\lambda
\]
solves the linearized equation in normal form
(\ref{eqLinP4SelfAdjointArticle}), and
$v(x)=\sqrt{q(x)}\,y(x)$ solves the original linearized equation
(\ref{eqLinP4ScalarArticle}). A different rapid-decay homology class gives a
second integral solution. Its Wronskian with the first solution depends holomorphically on
local coordinates on the monodromy manifold. If it is nonzero for at least
one set of data, the two integral solutions form a fundamental pair
generically. Section~6 gives numerical evidence for this condition for one
set of complex data. The nondegeneracy statement is conditional on an exact
nonzero value.

We also derive componentwise Volterra equations for canonical fundamental
solutions, exact expressions for the four Stokes multipliers through the
solutions of these equations, the global relation among
$s_1,\ldots,s_4$, $\theta_0$, and $\theta_\infty$, and integral formulas
for infinitesimal variations of the Stokes multipliers, the connection
matrix, and the local monodromy.

\paragraph{Organization of the paper.}
Section~2 fixes the Hamiltonian normalization and the parameters. Section~3
contains the scalar and Hamiltonian linearizations and the Liouville
transformation. In Section~4, the Lax pair is written explicitly and the
phase and first two coefficients of its formal solution are computed.
Section~5 defines the monodromy data, the Volterra equations, and integral
formulas for $s_1,\ldots,s_4$, including the geometry of progressive
contours. Section~6 proves the identity for $Q_{IV}$ and $R_{IV}$,
constructs integral solutions, states a criterion for their linear independence,
and implements the integral representation numerically. Section~7 derives
variations of the monodromy data and proves convergence of the corresponding
integrals. Section~8 relates the result to the Hamiltonian hierarchy of
asymptotic problems.

\section{Hamiltonian form of \texorpdfstring{$P_{IV}$}{PIV}}

The independent variable $x\in\mathbb C$ plays the role of time in the
Hamiltonian system and is the isomonodromic deformation parameter. We denote
the canonical coordinate and momentum by $q=q(x)$ and $p=p(x)$,
respectively. These functions are assumed to be meromorphic in $x$, and a
prime always denotes differentiation with respect to $x$.

The quantities $\theta_0,\theta_\infty\in\mathbb C$ are fixed parameters,
independent of both $x$ and the spectral variable $\lambda$. In the Lax pair
below, $\pm\theta_0$ are the local exponents at the Fuchsian singularity
$\lambda=0$, whereas $\mp\theta_\infty$ are the formal exponents at the
irregular singularity $\lambda=\infty$. The isomonodromic evolution in $x$
preserves these parameters.

We use the Hamiltonian normalization compatible with the Jimbo--Miwa Lax pair
\cite{JimboMiwa1981,Okamoto1986},
\begin{equation}
    q'=\frac{\partial H_{IV}}{\partial p},\qquad
    p'=-\frac{\partial H_{IV}}{\partial q},
    \label{eqP4HamiltonSystemArticle}
\end{equation}
where
\begin{equation}
    H_{IV}
    =
    2p^2q-\frac18q^3-\frac12xq^2
    +\frac12(2\theta_\infty-1-x^2)q
    -\frac{2\theta_0^2}{q}.
    \label{eqP4HamiltonianArticle}
\end{equation}
Thus,
\begin{equation}
    q'=4pq,\qquad
    p'=
    -2p^2+\frac38q^2+xq-\frac12(2\theta_\infty-1-x^2)
    -\frac{2\theta_0^2}{q^2}.
    \label{eqP4HamiltonEquationsArticle}
\end{equation}
Eliminating $p$ gives the fourth Painlev\'e equation
\begin{equation}
    q''
    =
    \frac{(q')^2}{2q}
    +\frac32 q^3
    +4xq^2
    +2(x^2-\alpha)q
    +\frac{\beta}{q},
    \label{eqP4ScalarArticle}
\end{equation}
where
\begin{equation}
    \alpha=2\theta_\infty-1,\qquad
    \beta=-8\theta_0^2 .
    \label{eqP4ParametersArticle}
\end{equation}
In the scalar equation, $\theta_\infty$ determines $\alpha$ and
$\theta_0^2$ determines $\beta$; the monodromy problem retains the local
exponent $\theta_0$ itself.

\section{Linearization}

Let $(q(x;\varepsilon),p(x;\varepsilon))$ be a one-parameter family of
solutions of the Hamiltonian system (\ref{eqP4HamiltonEquationsArticle}). Its
variation is the vector
\[
    \eta=
    \begin{pmatrix}\delta q\\ \delta p\end{pmatrix},
    \qquad
    \delta q=\left.\frac{\partial q}{\partial\varepsilon}\right|_{\varepsilon=0},
    \qquad
    \delta p=\left.\frac{\partial p}{\partial\varepsilon}\right|_{\varepsilon=0}.
\]
For the scalar equation, set $v=\delta q$, so that
$q(x;\varepsilon)=q(x;0)+\varepsilon v(x)+O(\varepsilon^2)$. All
calculations are local in a domain where $q(x;0)$ is finite and nonzero.

\begin{proposition}
The function $v$ satisfies the linearized equation
\begin{equation}
    v''
    =
    \frac{q'}{q}v'
    +
    \left(
        -\frac{(q')^2}{2q^2}
        +\frac92q^2
        +8xq
        +2(x^2-\alpha)
        -\frac{\beta}{q^2}
    \right)v .
    \label{eqLinP4ScalarArticle}
\end{equation}
\end{proposition}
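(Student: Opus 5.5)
The plan is to differentiate the scalar equation (\ref{eqP4ScalarArticle}) along the family $q(x;\varepsilon)$ and set $\varepsilon=0$. Write the right-hand side as $F(x,q,q')=\frac{(q')^2}{2q}+\frac32q^3+4xq^2+2(x^2-\alpha)q+\frac{\beta}{q}$. For each fixed $\varepsilon$ the function $q(x;\varepsilon)$ satisfies $q''=F(x,q,q')$, and the parameters $\alpha$ and $\beta$ do not depend on $\varepsilon$. The family depends smoothly on $\varepsilon$; it is obtained from the Hamiltonian system (\ref{eqP4HamiltonEquationsArticle}) with smoothly varying data. So we may differentiate in $\varepsilon$ and interchange $\partial_\varepsilon$ with $\partial_x$. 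This gives $v''=F_{q'}\,v'+F_q\,v$, with the partial derivatives evaluated at $q=q(x;0)$. The computation is local in a domain where $q\neq0$, so $F$ is holomorphic there.

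Next we compute the two partial derivatives. First,
\[
F_{q'}=\frac{q'}{q}.
\]
Second,
\[
F_q=-\frac{(q')^2}{2q^2}+\frac92q^2+8xq+2(x^2-\alpha)-\frac{\beta}{q^2}.
\]
Substituting these into $v''=F_{q'}v'+F_qv$ gives exactly (\ref{eqLinP4ScalarArticle}).

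We also have to justify that $q(x;\varepsilon)$ satisfies (\ref{eqP4ScalarArticle}) at all. For this we can quote the elimination of $p$ already done in Section~2: from $q'=4pq$ we get $p=q'/(4q)$, and substituting into the equation for $p'$ recovers (\ref{eqP4ScalarArticle}) with $\alpha,\beta$ as in (\ref{eqP4ParametersArticle}). As a consistency check, one could linearize the Hamiltonian system instead, with $\delta q'=4(p\,\delta q+q\,\delta p)$ and the matching equation for $\delta p'$, and then eliminate $\delta p$. This route should give the same equation, but it is not needed.

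There is no real obstacle. The only points that need care are the sign and the factor in the $(q')^2/q$ term, where the $q$-derivative yields $-\frac{(q')^2}{2q^2}$ and the $q'$-derivative yields $\frac{q'}{q}$, and the requirement $q(x;0)\neq0$, which makes every coefficient holomorphic.
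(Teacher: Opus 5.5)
Your proof is correct and takes the same approach as the paper. The paper gives no written proof: the proposition is presented as the $\varepsilon$-derivative of the scalar equation (\ref{eqP4ScalarArticle}), and your computation of $F_{q'}=q'/q$ and $F_q$ supplies exactly those omitted details.
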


In Hamiltonian variables, the linearization has the canonical form
\begin{equation}
    \frac{d}{dx}
    \begin{pmatrix}
        \delta q\\
        \delta p
    \end{pmatrix}
    =
    \begin{pmatrix}
        4p & 4q\\
        \frac34q+x+\frac{4\theta_0^2}{q^3} & -4p
    \end{pmatrix}
    \begin{pmatrix}
        \delta q\\
        \delta p
    \end{pmatrix}.
    \label{eqLinP4HamiltonianArticle}
\end{equation}
Introduce the symplectic matrix $J$ and the Hessian with respect to $(q,p)$:
\[
    J=
    \begin{pmatrix}0&1\\-1&0\end{pmatrix}.
\]
\[
    \operatorname{Hess}_{(q,p)}H_{IV}
    :=
    \begin{pmatrix}
        \dfrac{\partial^2H_{IV}}{\partial q^2}&
        \dfrac{\partial^2H_{IV}}{\partial q\partial p}\\[2mm]
        \dfrac{\partial^2H_{IV}}{\partial p\partial q}&
        \dfrac{\partial^2H_{IV}}{\partial p^2}
    \end{pmatrix}.
\]
For the Hamiltonian (\ref{eqP4HamiltonianArticle}), this matrix is
\[
    \operatorname{Hess}_{(q,p)}H_{IV}
    =
    \begin{pmatrix}
        -\dfrac34q-x-\dfrac{4\theta_0^2}{q^3}&4p\\[2mm]
        4p&4q
    \end{pmatrix}.
\]
The variable $x$ is held fixed when
$\operatorname{Hess}_{(q,p)}H_{IV}$ is evaluated. System
(\ref{eqLinP4HamiltonianArticle}) can be written as
\[
    \eta'=J\operatorname{Hess}_{(q,p)}H_{IV}\,\eta
\].
The quadratic Hamiltonian of the linearized Hamiltonian system is
\begin{equation}
    \begin{aligned}
        K_{IV}
        &=
        \frac12
        \begin{pmatrix}
            \delta q & \delta p
        \end{pmatrix}
        \operatorname{Hess}_{(q,p)}H_{IV}
        \begin{pmatrix}
            \delta q\\
            \delta p
        \end{pmatrix}
        \\[1mm]
        &=
        2q(\delta p)^2+4p\,\delta q\,\delta p
        -
        \left(
            \frac38q+\frac{x}{2}+\frac{2\theta_0^2}{q^3}
        \right)(\delta q)^2.
    \end{aligned}
    \label{eqP4TangentHamiltonianArticle}
\end{equation}
$K_{IV}=K_{IV}(\delta q,\delta p;x)$ is the Hamiltonian for the variation
$(\delta q,\delta p)$ along the fixed solution $(q(x),p(x))$, because
\[
    (\delta q)'=\frac{\partial K_{IV}}{\partial(\delta p)},
    \qquad
    (\delta p)'=-\frac{\partial K_{IV}}{\partial(\delta q)},
\]
coincide with (\ref{eqLinP4HamiltonianArticle}).

The scalar equation (\ref{eqLinP4ScalarArticle}) contains a first-derivative
term. It can be removed by the standard Liouville transformation. In a simply
connected domain where $q\ne0$, fix a branch of $\sqrt q$ and set
\begin{equation}
    v=\sqrt{q}\,y .
    \label{eqP4SelfAdjointSubstitutionArticle}
\end{equation}
Then
\[
    v''-\frac{q'}qv'
    =
    \sqrt q\left[
        y''+
        \left(
            \frac{q''}{2q}-\frac{3(q')^2}{4q^2}
        \right)y
    \right].
\]
Substitution of (\ref{eqLinP4ScalarArticle}) and elimination of $q''$ by
(\ref{eqP4ScalarArticle}) give the normal form, with the first-derivative
term eliminated:
\begin{equation}
    y''
    =
    \left(
        x^2-\alpha
        +6xq
        +\frac{15}{4}q^2
        -\frac{3\beta}{2q^2}
    \right)y .
    \label{eqLinP4SelfAdjointArticle}
\end{equation}
In terms of the parameters (\ref{eqP4ParametersArticle}), the last term is
$12\theta_0^2/q^2$.

Denoting the coefficient in parentheses by $U_{IV}(x)$, write
(\ref{eqLinP4SelfAdjointArticle}) in operator form as
\[
    \mathcal L_{IV}y=0,
    \qquad
    \mathcal L_{IV}:=-\frac{d^2}{dx^2}+U_{IV}(x).
\]
The differential expression $\mathcal L_{IV}$ is formally symmetric with
respect to the bilinear pairing, without complex conjugation: for any twice
differentiable functions $f$ and $g$, Lagrange's identity gives
\[
    f\mathcal L_{IV}g-(\mathcal L_{IV}f)g
    =
    \frac{d}{dx}\bigl(f'g-fg'\bigr).
\]
After integration, the difference between the two bilinear expressions is
determined by the boundary term $f'g-fg'$.

\section{The Lax pair and the monodromy problem}

Set $\sigma_3=\diag(1,-1)$. In the Jimbo--Miwa normalization, the spectral
part of the $P_{IV}$ Lax pair has the rational form
\begin{equation}
    \frac{\partial\Psi}{\partial\lambda}
    =
    A_{IV}(\lambda,x)\Psi,
    \qquad
    A_{IV}(\lambda,x)
    =
    \lambda\sigma_3+A_0(x)+\frac{A_{-1}(x)}{\lambda}.
    \label{eqP4LaxStructureArticle}
\end{equation}
The diagonal gauge factor $k(x)$ used in the general formulation is set equal
to one below
\cite{JimboMiwa1981,FokasItsKapaevNovokshenov2006,VanDerPutTop2013,Babich2026}.
In this gauge, set
\begin{equation}
    h=2p-x-\frac q2.
    \label{eqP4hArticle}
\end{equation}
Then
\begin{equation}
    A_{IV}(\lambda,x)
    =
    \begin{pmatrix}
        a(\lambda) & b(\lambda)\\
        c(\lambda) & -a(\lambda)
    \end{pmatrix},
    \label{eqP4AabcArticle}
\end{equation}
where
\begin{eqnarray}
    a(\lambda)&=&\lambda+x+\frac{qh}{2\lambda},
    \label{eqP4aArticle}
    \\
    b(\lambda)&=&1-\frac{q}{2\lambda},
    \label{eqP4bArticle}
    \\
    c(\lambda)&=&-qh-2\theta_\infty
    +\frac{qh^2-4\theta_0^2/q}{2\lambda}.
    \label{eqP4cArticle}
\end{eqnarray}
Equivalently, $A_0$ and $A_{-1}$ are
\begin{equation}
    A_0=
    \begin{pmatrix}
        x & 1\\
        -qh-2\theta_\infty & -x
    \end{pmatrix},
    \qquad
    A_{-1}
    =
    \frac12
    \begin{pmatrix}
        qh & -q\\
        qh^2-4\theta_0^2/q & -qh
    \end{pmatrix}.
    \label{eqP4A0AminusArticle}
\end{equation}
The eigenvalues of $A_{-1}$ are $\pm\theta_0$; hence $\lambda=0$ is a
Fuchsian singular point. At $\lambda=\infty$, the system has an irregular
singularity of Poincar\'e rank two: with $z=1/\lambda$, the coefficient
matrix is $-\sigma_3/z^3-A_0/z^2-A_{-1}/z$.

The deformation part of the Lax pair is
\[
    \frac{\partial\Psi}{\partial x}=B_{IV}(\lambda,x)\Psi,
    \qquad
    B_{IV}=
    \begin{pmatrix}
        \lambda+x+q/2&1\\
        -qh-2\theta_\infty&-\lambda-x-q/2
    \end{pmatrix}.
\]
The zero-curvature condition
\[
    \partial_xA_{IV}-\partial_\lambda B_{IV}
    +[A_{IV},B_{IV}]=0
\]
is equivalent to the system
\[
    q'=q(2h+q+2x),
    \qquad
    h'=-h^2-2h(q+x)-2\theta_\infty
       -\frac{4\theta_0^2}{q^2},
\]
which also follows from (\ref{eqP4HamiltonEquationsArticle}) and the
definition (\ref{eqP4hArticle}).

We now determine the phase of the formal solution at $\lambda=\infty$.
Since $A_{IV}$ is traceless, its eigenvalues are
$\pm\mu(\lambda,x)$, where the branch $\mu\sim\lambda$ is selected and
\[
    \mu^2=-\det A_{IV}=a^2+bc.
\]
Equations (\ref{eqP4aArticle})--(\ref{eqP4cArticle}) give directly
\[
    a^2
    =
    \lambda^2+2x\lambda+x^2+qh+O(\lambda^{-1}),
    \qquad
    bc=-qh-2\theta_\infty+O(\lambda^{-1}).
\]
Consequently,
\[
    \mu^2
    =
    \lambda^2+2x\lambda+x^2-2\theta_\infty+O(\lambda^{-1}),
\]
and expansion of the square root yields
\[
    \mu
    =
    \lambda+x-\frac{\theta_\infty}{\lambda}
    +O(\lambda^{-2}).
\]
The formal diagonalizing gauge has the form
\[
    T(\lambda,x)=I+O(\lambda^{-1}).
\]
Because $T^{-1}T_\lambda=O(\lambda^{-2})$, the terms of orders
$\lambda$, $1$, and $\lambda^{-1}$ obtained above integrate to the
exponential part of the formal solution:
\[
    \Psi_{\mathrm{form}}(\lambda,x)
    =
    \left(I+\frac{\Psi_1(x)}{\lambda}
    +\frac{\Psi_2(x)}{\lambda^2}+\cdots\right)
    \exp\left(\Theta_{IV}(\lambda,x)\sigma_3\right).
\]
The scalar phase $\Theta_{IV}$ is defined by
\[
    \frac{\partial\Theta_{IV}}{\partial\lambda}
    =
    \lambda+x-\frac{\theta_\infty}{\lambda}.
\]
On the selected branch of the logarithm, it is therefore
\begin{equation}
    \Theta_{IV}(\lambda,x)
    =
    \frac{\lambda^2}{2}+x\lambda-\theta_\infty\log\lambda .
    \label{eqP4PhaseArticle}
\end{equation}
We also compute the first matrix coefficients of the formal solution. Set
\[
    F(\lambda,x)
    =
    I+\sum_{n\geq1}\frac{F_n(x)}{\lambda^n},
    \qquad
    F_0=I,\qquad F_{-1}=0.
\]
Substitution of $\Psi_{\mathrm{form}}=F e^{\Theta_{IV}\sigma_3}$ into
(\ref{eqP4LaxStructureArticle}) and comparison of the coefficients of
$\lambda^{-n}$ give the recurrence
\[
\begin{split}
    [F_{n+1},\sigma_3]
    &+xF_n\sigma_3-A_0F_n
    -\theta_\infty F_{n-1}\sigma_3\\
    &-A_{-1}F_{n-1}-(n-1)F_{n-1}=0,
    \qquad n=0,1,\ldots .
\end{split}
\]
Here $F_n=\Psi_n$. The diagonal part of each $F_n$ is determined by the
solvability condition for the next recurrence equation. For brevity, define
\[
\begin{aligned}
    \chi&=qh+2\theta_\infty,\\
    \rho
    &=
    \frac{qh^2}{4}+\frac{q^2h}{4}+\frac{xqh}{2}
    +\frac{q\theta_\infty}{2}+x\theta_\infty
    -\frac{\theta_0^2}{q},\\
    \Delta
    &=x\rho+\frac{\theta_\infty^2-\theta_0^2}{2},
    \qquad
    \varkappa=\rho^2+\frac{\chi}{4}.
\end{aligned}
\]
Then the first two coefficients are
\[
    \Psi_1
    =
    \begin{pmatrix}
        -\rho&-\dfrac12\\[1mm]
        -\dfrac{\chi}{2}&\rho
    \end{pmatrix},
\]
\[
    \Psi_2
    =
    \begin{pmatrix}
        \dfrac{\varkappa+\Delta}{2}
        &
        \dfrac{x+q/2-\rho}{2}\\[2mm]
        \rho\left(1+\dfrac{\chi}{2}\right)-\dfrac{q\chi}{4}
        &
        \dfrac{\varkappa-\Delta}{2}
    \end{pmatrix}.
\]
The coefficient comparison is carried through order $\lambda^{-3}$; the
diagonal of $\Psi_2$ is fixed by the solvability condition for the next
recurrence equation. In particular,
\[
    \operatorname{tr}\Psi_1=0,\qquad
    \operatorname{tr}\Psi_2=\varkappa,\qquad
    \det\Psi_1=-\varkappa,
\]
so that
\[
    \det\left(
        I+\frac{\Psi_1}{\lambda}+\frac{\Psi_2}{\lambda^2}
    \right)
    =
    1+O(\lambda^{-3}).
\]
Equivalently, the exponential factor is
\[
    e^{(\lambda^2/2+x\lambda)\sigma_3}
    \lambda^{-\theta_\infty\sigma_3}.
\]
The negative powers of $\lambda$ not included in
(\ref{eqP4PhaseArticle}) belong to the matrix prefactor. This is the standard
formal normalization of the Jimbo--Miwa system
\cite{JimboMiwa1981,FokasItsKapaevNovokshenov2006}; the sign of
$\theta_\infty\log\lambda$ is also fixed unambiguously by
(\ref{eqP4aArticle})--(\ref{eqP4cArticle}).

The quadratic term $\lambda^2/2$ produces four Stokes sectors of opening
$\pi/2$. Under a positive circuit about the origin,
$\log\lambda\mapsto\log\lambda+2\pi i$; hence the logarithmic term in
(\ref{eqP4PhaseArticle}) gives the formal monodromy factor
$e^{-2\pi i\theta_\infty\sigma_3}$. Together with the Fuchsian point
$\lambda=0$, these data determine the geometry of the contours used in the
integral formulas.

\paragraph{Spectral and deformation normalizations.}
The matrices $\Psi_k=F_k e^{\Theta_{IV}\sigma_3}$ below have spectral
normalization $F_k\to I$. Put $b_0=x+q/2$. Zero curvature and comparison
of the canonical asymptotics give
\[
 \partial_x\Psi_k=B_{IV}\Psi_k-\Psi_k b_0\sigma_3.
\]
Indeed, $\Psi_k^{-1}(B_{IV}\Psi_k-\partial_x\Psi_k)$ is independent of
$\lambda$; its value is fixed by the sectorial normalization.
On a simply connected pole-free $x$-domain define
\[
 \mathfrak d(x)=\exp\left(\int_{x_0}^x(t+q(t)/2)\,dt\right),\quad
 \mathsf D=\diag(\mathfrak d,\mathfrak d^{-1}),\quad \Phi_k=\Psi_k\mathsf D.
\]
Then $\Phi_{k,\lambda}=A_{IV}\Phi_k$ and $\Phi_{k,x}=B_{IV}\Phi_k$.
The contour integrals in Section~6 use columns of $\Phi_k$.
For $\Psi_{k+1}=\Psi_kS_k$, the constant Stokes matrices are
\[
 \widetilde S_k=\mathsf D^{-1}S_k\mathsf D,\qquad
 S_k'=[b_0\sigma_3,S_k],\qquad \widetilde S_k'=0.
\]
Thus $s_k'=-2b_0s_k$ for $k=1,3$ and $s_k'=2b_0s_k$ for $k=2,4$.
Sections~5 and~7 use spectral normalization at fixed $x$. At $x=x_0$,
$\mathsf D=I$ and $S_k=\widetilde S_k$.
The monodromy coordinates in Section~6 refer to this reference
normalization.

\section{Monodromy data}

Denote the canonical fundamental solutions near infinity by
$\Psi_k$, $k=1,\ldots,4$. They are normalized by the asymptotic condition
\begin{equation}
    \Psi_k(\lambda)
    \sim
    \left(I+O(\lambda^{-1})\right)
    \exp\left(\Theta_{IV}(\lambda,x)\sigma_3\right),
    \qquad
    \lambda\to\infty,\quad \lambda\in\Omega_k,
    \label{eqP4CanonicalInfinityArticle}
\end{equation}
where the canonical domains $\Omega_k$ are fixed by the phase-plane
construction below and have asymptotic opening $\pi/2$. With the chosen
normalization, the Stokes matrices are defined by
\begin{equation}
    \Psi_{k+1}=\Psi_k S_k,\qquad k=1,\ldots,4,
    \label{eqP4StokesConnectionArticle}
\end{equation}
where the solutions are considered on the universal cover of a punctured
neighborhood of infinity. The matrices $S_k$ are triangular:
\begin{equation}
    S_1=\begin{pmatrix}1&0\\ s_1&1\end{pmatrix},\quad
    S_2=\begin{pmatrix}1&s_2\\0&1\end{pmatrix},\quad
    S_3=\begin{pmatrix}1&0\\ s_3&1\end{pmatrix},\quad
    S_4=\begin{pmatrix}1&s_4\\0&1\end{pmatrix}.
    \label{eqP4StokesMatricesArticle}
\end{equation}
Set
\begin{equation}
    F_\infty=e^{-2\pi i\theta_\infty\sigma_3}
    =
    \begin{pmatrix}
        \alpha_\infty&0\\0&\alpha_\infty^{-1}
    \end{pmatrix},
    \qquad
    \alpha_\infty=e^{-2\pi i\theta_\infty}.
    \label{eqP4FormalMonodromyInfinityArticle}
\end{equation}
The sectors are numbered clockwise, as specified below. Set
$P=S_1S_2S_3S_4$. Continuing $\Psi_1$ counterclockwise once about zero
gives $\Psi_1(\lambda e^{2\pi i})=\Psi_5(\lambda)F_\infty$.
Thus, in the basis $\Psi_1$, the counterclockwise monodromy about zero
and the clockwise monodromy about infinity are
\begin{equation}
    M_0=P F_\infty,\qquad
    M_\infty=F_\infty^{-1}P^{-1}=M_0^{-1}.
    \label{eqP4MonodromyInfinityArticle}
\end{equation}

For the local connection formulas assume $2\theta_0\notin\mathbb Z$.
Resonant local systems require a Levelt normalization and are outside the
scope of those formulas. Near the origin, choose a local solution
\begin{equation}
    \Psi^{(0)}(\lambda)
    =
    G_0(\lambda)\lambda^{\theta_0\sigma_3},
    \qquad
    G_0(0)\in SL(2,\mathbb C),
    \label{eqP4LocalZeroArticle}
\end{equation}
and connection matrices $C_k$:
\begin{equation}
    \Psi_k(\lambda)=\Psi^{(0)}(\lambda)C_k .
    \label{eqP4ConnectionMatrixArticle}
\end{equation}
In the same basis, the local monodromy at $\lambda=0$ is
\begin{equation}
    M_0=C_1^{-1}e^{2\pi i\theta_0\sigma_3}C_1 .
    \label{eqP4MonodromyZeroArticle}
\end{equation}
With the adopted orientation of the circuits, the global monodromy relation
on the sphere is $M_0M_\infty=I$. In particular, the trace of $M_\infty$ is
fixed by the local exponent $\theta_0$:
\begin{equation}
    \operatorname{tr}M_\infty=2\cos(2\pi\theta_0).
    \label{eqP4TraceRelationArticle}
\end{equation}
Since a unimodular $2\times2$ matrix and its inverse have the same trace,
equation (\ref{eqP4MonodromyInfinityArticle}) gives the relation
\begin{equation}
    \alpha_\infty^2s_2s_3+\alpha_\infty^2
    +s_1s_2s_3s_4+s_1s_2+s_1s_4+s_3s_4+1
    =
    2\alpha_\infty\cos(2\pi\theta_0).
    \label{eqP4StokesRelationArticle}
\end{equation}
The monodromy data are specified by
\[
    \theta_0,\quad\theta_\infty,\quad s_1,s_2,s_3,s_4,\quad C_k,
\]
subject to (\ref{eqP4StokesRelationArticle}) and diagonal gauge equivalence.
The matrix $C_k$ relates a canonical solution normalized at
$\lambda=\infty$ to a local Fuchsian solution at $\lambda=0$.

\subsection*{Integral construction of the Stokes matrices}

We define the canonical solutions by equivalent Volterra integral equations
\cite{Wasow1987,FokasItsKapaevNovokshenov2006}; the multipliers
$s_1,\ldots,s_4$ then follow from
(\ref{eqP4StokesConnectionArticle}).

Take the finite approximation to the formal prefactor obtained above,
\[
    \widehat F(\lambda,x)
    =
    I+\frac{\Psi_1(x)}{\lambda}
    +\frac{\Psi_2(x)}{\lambda^2}
\]
and write out all its entries:
\[
    \widehat F=(f_{ij})_{i,j=1}^2,
    \qquad
    \delta_F=\det\widehat F=f_{11}f_{22}-f_{12}f_{21},
\]
\[
\begin{aligned}
    f_{11}
    &=1-\frac{\rho}{\lambda}
      +\frac{\varkappa+\Delta}{2\lambda^2},&
    f_{12}
    &=-\frac{1}{2\lambda}
      +\frac{x+q/2-\rho}{2\lambda^2},\\
    f_{21}
    &=-\frac{\chi}{2\lambda}
      +\frac{\rho(1+\chi/2)-q\chi/4}{\lambda^2},&
    f_{22}
    &=1+\frac{\rho}{\lambda}
      +\frac{\varkappa-\Delta}{2\lambda^2}.
\end{aligned}
\]
Here $q=q(x)$ and $h=h(x)$, while $\chi$, $\rho$, $\Delta$, and
$\varkappa$ were defined above; they are held constant under differentiation
with respect to $\lambda$. Hence,
\[
\begin{aligned}
    \partial_\lambda f_{11}
    &=\frac{\rho}{\lambda^2}
      -\frac{\varkappa+\Delta}{\lambda^3},&
    \partial_\lambda f_{12}
    &=\frac{1}{2\lambda^2}
      -\frac{x+q/2-\rho}{\lambda^3},\\
    \partial_\lambda f_{21}
    &=\frac{\chi}{2\lambda^2}
      -\frac{2\rho(1+\chi/2)-q\chi/2}{\lambda^3},&
    \partial_\lambda f_{22}
    &=-\frac{\rho}{\lambda^2}
      -\frac{\varkappa-\Delta}{\lambda^3}.
\end{aligned}
\]
Set also
\[
    E(\lambda,x)
    =
    e^{\Theta_{IV}(\lambda,x)\sigma_3},
    \qquad
    D(\lambda,x)
    =
    \frac{\partial\Theta_{IV}}{\partial\lambda}\sigma_3.
\]
Define the residual matrix
\[
    \mathcal R(\lambda,x)
    =
    \widehat F^{-1}
    \left(A_{IV}\widehat F-\frac{\partial\widehat F}{\partial\lambda}\right)
    -D.
\]
Componentwise, set
\[
\begin{aligned}
    Q_{11}&=af_{11}+bf_{21}-\partial_\lambda f_{11},&
    Q_{12}&=af_{12}+bf_{22}-\partial_\lambda f_{12},\\
    Q_{21}&=cf_{11}-af_{21}-\partial_\lambda f_{21},&
    Q_{22}&=cf_{12}-af_{22}-\partial_\lambda f_{22},
\end{aligned}
\]
where $a$, $b$, and $c$ are given by
(\ref{eqP4aArticle})--(\ref{eqP4cArticle}). Then
\[
    \mathcal R=
    \begin{pmatrix}
        \mathcal R_{11}&\mathcal R_{12}\\
        \mathcal R_{21}&\mathcal R_{22}
    \end{pmatrix}
\]
consists of the explicitly specified scalar functions
\[
\begin{aligned}
    \mathcal R_{11}
    &=\frac{f_{22}Q_{11}-f_{12}Q_{21}}{\delta_F}
      -\left(\lambda+x-\frac{\theta_\infty}{\lambda}\right),\\
    \mathcal R_{12}
    &=\frac{f_{22}Q_{12}-f_{12}Q_{22}}{\delta_F},\\
    \mathcal R_{21}
    &=\frac{-f_{21}Q_{11}+f_{11}Q_{21}}{\delta_F},\\
    \mathcal R_{22}
    &=\frac{-f_{21}Q_{12}+f_{11}Q_{22}}{\delta_F}
      +\left(\lambda+x-\frac{\theta_\infty}{\lambda}\right).
\end{aligned}
\]
The recurrence for $\Psi_1$ and $\Psi_2$ shows that
\[
    \mathcal R(\lambda,x)=O(\lambda^{-2}),
    \qquad \lambda\to\infty.
\]
For sufficiently large $|\lambda|$, the matrix $\widehat F$ is invertible.
After the substitution
\[
    \Psi=\widehat F\,Z\,E
\]
the spectral equation of the Lax pair is equivalent to
\[
    \frac{\partial Z}{\partial\lambda}
    =
    [D,Z]+\mathcal RZ.
\]

Set $\varepsilon_1=1$ and $\varepsilon_2=-1$. For each Stokes sector
$\Omega_k$, choose progressive contours $\Gamma_k^{ij}(\lambda)$ running
from the corresponding asymptotic direction at infinity to $\lambda$. On
matrix-valued functions, define the operator
\[
    (\mathcal K_k Z)_{ij}(\lambda)
    =
    \int_{\Gamma_k^{ij}(\lambda)}
    e^{(\varepsilon_i-\varepsilon_j)
    (\Theta_{IV}(\lambda)-\Theta_{IV}(\mu))}
    (\mathcal R(\mu)Z(\mu))_{ij}\,d\mu .
\]
The dependence on $x$ is suppressed in this and the following formulas. The
canonical solution in $\Omega_k$ is determined by the Volterra equation
\[
    Z_k=I+\mathcal K_kZ_k,
    \qquad
    \Psi_k=\widehat F\,Z_kE.
\]
If
\[
    Z_k=(z_{ij}^{(k)})_{i,j=1}^2,
\]
then this matrix equation is equivalent to the following four scalar integral
equations:
\[
\begin{aligned}
z_{11}^{(k)}(\lambda)
&=1+\int_{\Gamma_k^{11}(\lambda)}
  \bigl(\mathcal R_{11}z_{11}^{(k)}
       +\mathcal R_{12}z_{21}^{(k)}\bigr)(\mu)\,d\mu,\\
z_{12}^{(k)}(\lambda)
&=\int_{\Gamma_k^{12}(\lambda)}
  e^{2(\Theta_{IV}(\lambda)-\Theta_{IV}(\mu))}
  \bigl(\mathcal R_{11}z_{12}^{(k)}
       +\mathcal R_{12}z_{22}^{(k)}\bigr)(\mu)\,d\mu,\\
z_{21}^{(k)}(\lambda)
&=\int_{\Gamma_k^{21}(\lambda)}
  e^{-2(\Theta_{IV}(\lambda)-\Theta_{IV}(\mu))}
  \bigl(\mathcal R_{21}z_{11}^{(k)}
       +\mathcal R_{22}z_{21}^{(k)}\bigr)(\mu)\,d\mu,\\
z_{22}^{(k)}(\lambda)
&=1+\int_{\Gamma_k^{22}(\lambda)}
  \bigl(\mathcal R_{21}z_{12}^{(k)}
       +\mathcal R_{22}z_{22}^{(k)}\bigr)(\mu)\,d\mu.
\end{aligned}
\]
In the right-hand sides, every function under the integral sign is evaluated
at $\mu$, whereas the exterior factors $\Theta_{IV}(\lambda)$ refer to the
finite endpoint of the contour.

The contour $\Gamma_k^{ij}(\lambda)$ has one asymptotic branch and the finite
endpoint $\lambda$, and it may lie in a single Stokes sector. Oriented
differences of such paths have two asymptotic branches; a nonzero difference
connects distinct connected components of the decay region.

We now specify the contours precisely. Fix a small $\delta>0$ and a
sufficiently large $R$. The asymptotic relation
\[
    \operatorname{Re}\Theta_{IV}(\mu)
    =\frac{r^2}{2}\cos(2\varphi)+O(r),
    \qquad \mu=re^{i\varphi},\quad r\to\infty,
\]
shows that the boundaries of the exponential-decay regions are the four
Stokes rays
\[
    \arg\mu=\frac{\pi}{4}+\frac{m\pi}{2},
    \qquad m=0,1,2,3.
\]
The asymptotic tail of $\Gamma_k^{12}(\lambda)$ is chosen inside one of the
sectors
\[
    |\arg\mu-m\pi|<\frac{\pi}{4}-\delta,
    \qquad m\in\mathbb Z,
\]
where
\[
    \operatorname{Re}\Theta_{IV}(\mu)\longrightarrow+\infty,
\]
whereas the tail of $\Gamma_k^{21}(\lambda)$ is chosen inside one of the
sectors
\[
    \left|\arg\mu-\left(\frac{\pi}{2}+m\pi\right)\right|
    <\frac{\pi}{4}-\delta,
    \qquad m\in\mathbb Z,
\]
where
\[
    \operatorname{Re}\Theta_{IV}(\mu)\longrightarrow-\infty.
\]
The contour is oriented \emph{from infinity toward $\lambda$} and consists
of a radial tail
\[
    \mu=r e^{i\varphi},\qquad +\infty>r\geq R,
\]
with a fixed admissible $\varphi$, followed by a finite smooth arc from
$Re^{i\varphi}$ to $\lambda$. Along a progressive tail,
$\operatorname{Re}\Theta_{IV}(\mu)$ varies monotonically from the relevant
infinite limit to a finite value. Therefore, the exponential factors are
\[
    e^{2(\Theta_{IV}(\lambda)-\Theta_{IV}(\mu))}
    \quad\hbox{and}\quad
    e^{-2(\Theta_{IV}(\lambda)-\Theta_{IV}(\mu))}
\]
and decay on the asymptotic tails. For diagonal entries there is no
exponential factor; the contours $\Gamma_k^{11}$ and $\Gamma_k^{22}$ may be
taken to coincide with either admissible contour, and absolute convergence
follows from $\mathcal R=O(\mu^{-2})$. The finite part of the path does not
affect convergence and may be deformed, with fixed endpoints, within a domain
where the integrand is analytic.

If the tail for the $12$ component is chosen in the horizontal sector around
the negative real semiaxis, then, for the cut $(-\infty,0]$, one takes
$\varphi=\pi-\eta$ or $\varphi=-\pi+\eta$, with
$0<\eta<\pi/4-\delta$. Thus the path follows one bank of the cut but not the
cut itself. The two banks correspond to different values of $\log\mu$ and,
after gluing, belong to adjacent sheets of the universal cover.

On a Stokes ray, the leading exponential factor oscillates and does not
provide uniform decay. Its boundary value is defined as the limit
$\varphi\to(\pi/4+m\pi/2)\pm0$ from the chosen sector. The contours for two
adjacent canonical solutions pass on the corresponding sides of the boundary
ray. In the single-integral term of the Neumann series, the resulting
two-ended difference connects two distinct components of the decay region. A
difference whose ends lie in the same component can be closed by a vanishing
arc and is zero by Cauchy's theorem. This geometry is shown in
Figure~\ref{figP4ProgressiveContoursArticle}.

\begin{figure}[htbp]
\centering
\begin{tikzpicture}[scale=0.92,>=Latex,font=\small]
    \fill[black!5] (0,0)--(-45:4.25) arc (-45:45:4.25)--cycle;
    \fill[black!5] (0,0)--(135:4.25) arc (135:225:4.25)--cycle;
    \fill[black!14] (0,0)--(45:4.25) arc (45:135:4.25)--cycle;
    \fill[black!14] (0,0)--(225:4.25) arc (225:315:4.25)--cycle;

    \draw[->] (-4.55,0)--(4.55,0) node[right] {$\operatorname{Re}\lambda$};
    \draw[->] (0,-4.45)--(0,4.45) node[above] {$\operatorname{Im}\lambda$};
    \foreach \ang in {45,135,225,315}
        \draw[black,densely dotted] (0,0)--(\ang:4.25);
    \draw[gray,dash dot] (0,0) circle (3.62);
    \node[gray,fill=white,inner sep=1pt] at (-2.58,-2.58) {$|\mu|=R$};

    \draw[black,very thick,double=white,double distance=1.1pt]
        (-4.3,0)--(-0.16,0);
    \node[below] at (-2.8,-0.08) {cut $(-\infty,0]$};
    \draw[densely dotted] (0,0) circle (0.42);
    \fill[white] (0,0) circle (0.10);
    \draw (0,0) circle (0.10);
    \node[below right] at (0.08,-0.04) {$0$};

    \coordinate (P) at (1.45,0.62);
    \fill (P) circle (1.7pt) node[below right] {$\lambda$};
    \draw[black,very thick,->]
        (4.18,0.18) .. controls (3.2,0.17) and (2.1,0.38) .. (P);
    \node[right] at (4.10,0.35) {$\infty_{12}$};
    \node[below] at (3.15,0.12) {$\Gamma^{12}$};
    \draw[black,very thick,dash pattern=on 6pt off 2pt,->]
        (0.18,4.18) .. controls (0.22,3.0) and (0.75,1.25) .. (P);
    \node[above right] at (0.18,4.08) {$\infty_{21}$};
    \node[left] at (0.34,3.05) {$\Gamma^{21}$};

    \node at (3.0,-1.15)
        {$\operatorname{Re}\Theta_{IV}\to+\infty$};
    \node[rotate=90] at (-1.15,2.85)
        {$\operatorname{Re}\Theta_{IV}\to-\infty$};
    \node[rotate=45,above] at (45:3.05) {Stokes ray};
    \draw[->,thin] (-0.70,0.48) arc (145:215:0.86);
    \node[left,align=right] at (-0.83,0.55)
        {a circuit changes\\the solution sheet};
\end{tikzpicture}
\caption{Geometry of the progressive contours. The light-gray regions have
$\operatorname{Re}\Theta_{IV}\to+\infty$ and contain admissible tails for
the $12$ component; the dark-gray regions have
$\operatorname{Re}\Theta_{IV}\to-\infty$ and contain tails for the $21$
component. Dotted lines denote Stokes rays, the double line is the selected
cut, and the solid and dashed curves are the contours for the $12$ and $21$
components.}
\label{figP4ProgressiveContoursArticle}
\end{figure}

The matrix $A_{IV}(\lambda)$ is single-valued on the punctured plane and has
a Fuchsian singularity at the origin. For the fixed branch
$-\pi<\arg\lambda<\pi$, multivaluedness of the solutions is described by
\[
    E=e^{(\lambda^2/2+x\lambda)\sigma_3}
      \lambda^{-\theta_\infty\sigma_3},
    \qquad
    \Psi^{(0)}=G_0(\lambda)\lambda^{\theta_0\sigma_3}.
\]
Thus continuation around the origin replaces $\log\lambda$ by
$\log\lambda+2\pi i$ and multiplies $E$ on the right by
$F_\infty=e^{-2\pi i\theta_\infty\sigma_3}$, whereas the local Fuchsian
factor acquires $e^{2\pi i\theta_0\sigma_3}$. The formula for $\Psi^{(0)}$
is written for the generic nonresonant case; standard logarithmic terms may
occur when $2\theta_0\in\mathbb Z$. For integral exponents the power factor
may become single-valued, but the singular point of the system at
$\lambda=0$ remains.

The initial solution of the Volterra equation is constructed in
$|\lambda|>R$. Under analytic continuation toward the interior, the
solutions lie on the universal cover of $\mathbb C\setminus\{0\}$, do not
cross the selected cut, and retain their homotopy classes. A path cannot be
dragged through the origin, while a deformation that encircles the origin
introduces the local monodromy $M_0$. In the formula for $s_4$, the matrix
$Z_5$ lies on the next sheet, and the transition is accounted for by
$F_\infty$.

\paragraph{Domains and convergence.}
Here is a precise choice of paths for $\mathcal K_k$. Fix $x$, uniformly
in a compact pole-free set when needed, and put $w=\Theta_{IV}(\lambda,x)$.
For sufficiently large $H$, choose inverse branches $\lambda=\Lambda_k(w)$
on
\[
 V_k=\{w:\sigma_k\operatorname{Im}w>H\},\quad
 \sigma_k=(-1)^{k+1},\quad \Omega_k=\Lambda_k(V_k).
\]
The asymptotic lift has
$(1-k)\pi/2<\arg\lambda<(2-k)\pi/2$, $k=1,\ldots,5$.
In particular, $\Omega_1$ lies between directions $0$ and $\pi/2$,
and $\Omega_5$ is its lift after a clockwise turn.
Inversion near the corresponding branch of $\sqrt{2w}$ gives
\[
 |\Lambda_k(w)|\asymp |w|^{1/2},\qquad
 |\Lambda_k'(w)|\le C|w|^{-1/2}.
\]
For each $w\in V_k$, choose the corresponding root $\zeta_k(w)$ of
$2w$. On a circle $|\lambda-\zeta_k(w)|=L$ with fixed sufficiently
large $L$, the term $(\lambda^2-\zeta_k(w)^2)/2$ dominates
$x\lambda-\theta_\infty\log\lambda$, uniformly for $x$ in a compact
set, once $H$ is large. Rouch\'e's theorem gives exactly one root
in that disk. The roots patch to an analytic inverse on $V_k$;
$\Theta_{IV,\lambda}=\lambda+O(1)$ then gives the derivative bound.

Fix $0<\eta<1$. For entries $12$, $11$, and $22$ lift
$w+t(1+i\sigma_k\eta)$, with $t$ decreasing from $+\infty$ to zero.
For entry $21$ lift $w+t(-1+i\sigma_k\eta)$ with the same orientation.
Both paths remain in $V_k$, including when an endpoint varies. Their
off-diagonal kernel moduli equal $e^{-2t}$, and the diagonal kernels
equal one. Each limiting direction lies strictly inside the corresponding
decay sector. For $\delta<\pi/8$ these directions belong to the
subsectors specified above. Their projections can be deformed to the
central-ray models in the figures after analytic continuation.

Let $X_k$ be the Banach space of bounded analytic matrix functions on
$\Omega_k$, with the supremum of the maximum-entry norm.
From $\mathcal R=O(\lambda^{-2})$ one obtains
\[
 \|\mathcal K_k Z\|_{X_k}
 \le C_\eta\|Z\|_{X_k}\sup_{w\in V_k}
       \int_0^\infty (|w|+t)^{-3/2}\,dt
 \le 2C_\eta H^{-1/2}\|Z\|_{X_k}.
\]
Here $|w+t(\pm1+i\sigma_k\eta)|\ge c_\eta(|w|+t)$ because
$w$ and the direction vector lie in the same open half-plane and their
angle stays below $\pi-\arctan\eta$. Choose $2C_\eta H^{-1/2}<1$.
The Neumann series converges in $X_k$ and determines a unique bounded
solution of the integral equations. Locally uniform convergence permits
differentiation and recovers the spectral equation. On closed subsectors,
the same bounds give $Z_k-I=O(|\lambda|^{-1})$.
Repeating the construction with higher formal truncations gives the full
formal asymptotic expansion. The bounded normalized solution is unique:
a nontrivial constant triangular change would grow exponentially in one
of the two parts of $\Omega_k$ where $\operatorname{Re}w$ has opposite
signs. In neighboring domains the common subdominant column is therefore
the same; this yields the triangular matrices in
(\ref{eqP4StokesMatricesArticle}).
Differential-equation continuation defines the canonical solutions at common
finite comparison points on the cover. Thus both the endpoint choices and
the solutions are fixed by this construction.

With these choices, the Volterra equation is solved by the
convergent Neumann series
\[
    Z_k
    =
    I+\mathcal K_kI+\mathcal K_k^2I+\cdots,
\]
and the solution is then analytically continued into the sector.

Choose $\lambda_*$ in the intersection of the domains of analyticity of
$\Psi_k$ and $\Psi_{k+1}$, and set $E_*=E(\lambda_*,x)$. Equation
(\ref{eqP4StokesConnectionArticle}) gives directly
\[
    S_k
    =
    E_*^{-1}
    Z_k(\lambda_*)^{-1}Z_{k+1}(\lambda_*)E_*.
\]
The right-hand side is independent of $\lambda_*$ because both $\Psi_k$ and
$\Psi_{k+1}$ satisfy the same spectral system. For $k=4$, $Z_5$ denotes the
continuation to the next sheet of the universal cover.

The integral formulas are
\[
\begin{aligned}
    s_1
    &=
    e^{2\Theta_{IV}(\lambda_*)}
    \bigl(Z_1^{-1}Z_2\bigr)_{21}(\lambda_*),&
    s_2
    &=
    e^{-2\Theta_{IV}(\lambda_*)}
    \bigl(Z_2^{-1}Z_3\bigr)_{12}(\lambda_*),\\
    s_3
    &=
    e^{2\Theta_{IV}(\lambda_*)}
    \bigl(Z_3^{-1}Z_4\bigr)_{21}(\lambda_*),&
    s_4
    &=
    e^{-2\Theta_{IV}(\lambda_*)}
    \bigl(Z_4^{-1}Z_5\bigr)_{12}(\lambda_*),
\end{aligned}
\]
where each matrix $Z_k$ is determined by its Volterra equation. At
$\lambda_*$, set
\[
    d_k=\det Z_k
    =z_{11}^{(k)}z_{22}^{(k)}-z_{12}^{(k)}z_{21}^{(k)}.
\]
Then the same four identities take the fully componentwise form
\[
\begin{aligned}
s_1&=\frac{e^{2\Theta_{IV}(\lambda_*)}}{d_1}
 \left(-z_{21}^{(1)}z_{11}^{(2)}
       +z_{11}^{(1)}z_{21}^{(2)}\right)(\lambda_*),\\
s_2&=\frac{e^{-2\Theta_{IV}(\lambda_*)}}{d_2}
 \left(z_{22}^{(2)}z_{12}^{(3)}
       -z_{12}^{(2)}z_{22}^{(3)}\right)(\lambda_*),\\
s_3&=\frac{e^{2\Theta_{IV}(\lambda_*)}}{d_3}
 \left(-z_{21}^{(3)}z_{11}^{(4)}
       +z_{11}^{(3)}z_{21}^{(4)}\right)(\lambda_*),\\
s_4&=\frac{e^{-2\Theta_{IV}(\lambda_*)}}{d_4}
 \left(z_{22}^{(4)}z_{12}^{(5)}
       -z_{12}^{(4)}z_{22}^{(5)}\right)(\lambda_*).
\end{aligned}
\]
Substitution of the Neumann series turns these expressions into convergent
series of iterated integrals of the entries of $\mathcal R$.

The same formulas can be written directly in terms of the columns
$\psi_k^{(1)}$ and $\psi_k^{(2)}$ of $\Psi_k$. Since $\det\Psi_k=1$, the
triangular form of (\ref{eqP4StokesMatricesArticle}) gives
\[
    s_k
    =
    \det\bigl(\psi_k^{(1)},\psi_{k+1}^{(1)}\bigr),
    \qquad k=1,3,
\]
\[
    s_k
    =
    \det\bigl(\psi_{k+1}^{(2)},\psi_k^{(2)}\bigr),
    \qquad k=2,4.
\]
The columns in these determinants solve the integral equations given above.

Before extracting the first term of the Neumann series, fix the orientation
of the contour differences. All four contours
$\Gamma_k^{12}$, $\Gamma_{k+1}^{12}$,
$\Gamma_k^{21}$, and $\Gamma_{k+1}^{21}$ run along their respective
admissible rays from large
$|\lambda|$ to the same point $\lambda_*$. Therefore,
\[
\begin{aligned}
    \mathcal C_k^{12}
    &:={\Gamma}_{k+1}^{12}-{\Gamma}_{k}^{12}:
    \quad \infty_{k+1}^{(+)}\longrightarrow\lambda_*
    \longrightarrow\infty_k^{(+)},\\
    \mathcal C_k^{21}
    &:={\Gamma}_{k+1}^{21}-{\Gamma}_{k}^{21}:
    \quad \infty_{k+1}^{(-)}\longrightarrow\lambda_*
    \longrightarrow\infty_k^{(-)}.
\end{aligned}
\]
The sign $(+)$ means $\operatorname{Re}\Theta_{IV}\to+\infty$ and refers to
the $12$ component, while $(-)$ means
$\operatorname{Re}\Theta_{IV}\to-\infty$ and refers to the $21$ component.
For the active entries ($12$ for even $k$, $21$ for odd $k$),
the points $\infty_{k+1}^{(\pm)}$ and $\infty_k^{(\pm)}$ lie in
\emph{distinct} connected Stokes sectors on the universal cover; within one
component, the difference of homotopic paths vanishes by Cauchy's theorem.
On the second half of each composite path, the original orientation of
$\Gamma_k^{ij}$ is reversed by the minus sign. Figure
\ref{figP4PairedContoursArticle} shows models for $k=2,4$ in the $12$ case
and $k=1,3$ in the $21$ case; the second value of $k$ is obtained by rotating
the picture clockwise through $\pi$. In the left model ($k=2$),
$\arg\mu$ tends to $0$ on $\Gamma_k^{12}$ and to $-\pi$ on
$\Gamma_{k+1}^{12}$. In the right model ($k=1$), the corresponding
directions are $\pi/2$ and $-\pi/2$.
For $k=4$, the contour $\Gamma_5^{12}$ lies on
the next sheet of the universal cover.

\begin{figure}[htbp]
\centering
\begin{tikzpicture}[>=Latex,font=\scriptsize]
\begin{scope}[xshift=2.8cm,scale=0.72]
    \fill[black!5] (0,0)--(-45:3.35) arc (-45:45:3.35)--cycle;
    \fill[black!5] (0,0)--(135:3.35) arc (135:225:3.35)--cycle;
    \draw[->] (-3.65,0)--(3.65,0) node[right] {$\operatorname{Re}\mu$};
    \draw[->] (0,-2.65)--(0,2.75) node[above] {$\operatorname{Im}\mu$};
    \foreach \ang in {45,135,225,315}
        \draw[black,densely dotted] (0,0)--(\ang:3.25);
    \draw[gray,dash dot] (0,0) circle (2.85);
    \draw[black,very thick,double=white,double distance=0.8pt]
        (-3.45,0)--(-0.12,0);
    \node[above,align=center] at (-2.05,0.05)
        {cut; the tail follows\\one of its banks};
    \fill[white] (0,0) circle (0.09);
    \draw (0,0) circle (0.09) node[below right] {$0$};
    \coordinate (P12) at (0.95,0.63);
    \fill (P12) circle (1.7pt) node[above right] {$\lambda_*$};
    \draw[black,very thick,densely dashed,->]
        (3.45,0.22) .. controls (2.65,0.22) and (1.65,0.38) .. (P12);
    \node[below] at (2.35,0.19)
        {$\Gamma_k^{12}$};
    \node[right] at (3.38,0.42)
        {$\infty_k^{(+)}$};
    \draw[black,very thick,->]
        (-3.42,-0.28) .. controls (-2.15,-0.55) and (-0.35,-1.35) .. (P12);
    \node[below] at (-1.72,-0.76)
        {$\Gamma_{k+1}^{12}$};
    \node[left] at (-3.34,-0.48)
        {$\infty_{k+1}^{(+)}$};
    \node[fill=white,inner sep=1pt] at (0,-3.20)
        {$\mathcal C_k^{12}:\ \infty_{k+1}^{(+)}\to\lambda_*
          \to\infty_k^{(+)}$};
    \node[font=\small] at (0,3.50) {$12$ component, $k=2,4$};
\end{scope}

\begin{scope}[xshift=9.0cm,scale=0.72]
    \fill[black!14] (0,0)--(45:3.35) arc (45:135:3.35)--cycle;
    \fill[black!14] (0,0)--(225:3.35) arc (225:315:3.35)--cycle;
    \draw[->] (-2.75,0)--(2.85,0) node[right] {$\operatorname{Re}\mu$};
    \draw[->] (0,-3.65)--(0,3.65) node[above left] {$\operatorname{Im}\mu$};
    \foreach \ang in {45,135,225,315}
        \draw[black,densely dotted] (0,0)--(\ang:3.25);
    \draw[gray,dash dot] (0,0) circle (2.85);
    \draw[black,very thick,double=white,double distance=0.8pt]
        (-2.70,0)--(-0.12,0);
    \fill[white] (0,0) circle (0.09);
    \draw (0,0) circle (0.09) node[below right] {$0$};
    \coordinate (P21) at (0.82,0.52);
    \fill (P21) circle (1.7pt) node[above right] {$\lambda_*$};
    \draw[black,very thick,densely dashed,->]
        (0.20,3.45) .. controls (0.20,2.35) and (0.48,1.05) .. (P21);
    \node[right] at (0.28,2.27)
        {$\Gamma_k^{21}$};
    \node[right] at (0.50,3.35)
        {$\infty_k^{(-)}$};
    \draw[black,very thick,->]
        (-0.18,-3.45) .. controls (-0.10,-2.15) and (0.22,-0.45) .. (P21);
    \node[left] at (-0.06,-2.10)
        {$\Gamma_{k+1}^{21}$};
    \node[left] at (-0.50,-3.35)
        {$\infty_{k+1}^{(-)}$};
    \node[fill=white,inner sep=1pt] at (0,-4.05)
        {$\mathcal C_k^{21}:\ \infty_{k+1}^{(-)}\to\lambda_*
          \to\infty_k^{(-)}$};
    \node[font=\small] at (0,4.55) {$21$ component, $k=1,3$};
\end{scope}
\end{tikzpicture}
\caption{Pairs of contours entering the differences
$\Gamma_{k+1}^{12}-\Gamma_k^{12}$ and
$\Gamma_{k+1}^{21}-\Gamma_k^{21}$. The solid and dashed contours are
oriented from large $|\lambda|$ along their asymptotic rays toward
$\lambda_*$. In the oriented difference, the dashed contour is traversed in
the reverse direction. The two asymptotic branches of each composite path
lie in distinct connected Stokes sectors. The dash-dotted curve is
$|\mu|=R$, dotted lines are Stokes rays, and the double line is the selected
cut.}
\label{figP4PairedContoursArticle}
\end{figure}

Denote the sum of the Neumann-series terms containing two or more integrations
by $\mathcal N_k^{(\geq2)}$. The single-integral terms are
\[
\begin{aligned}
    s_k
    &=
    \left(
        \int_{\Gamma_{k+1}^{21}}
        -
        \int_{\Gamma_k^{21}}
    \right)
    e^{2\Theta_{IV}(\mu)}
    \mathcal R_{21}(\mu)\,d\mu
    +\mathcal N_k^{(\geq2)},\\
    &\qquad k=1,3,
\end{aligned}
\]
\[
\begin{aligned}
    s_k
    &=
    \left(
        \int_{\Gamma_{k+1}^{12}}
        -
        \int_{\Gamma_k^{12}}
    \right)
    e^{-2\Theta_{IV}(\mu)}
    \mathcal R_{12}(\mu)\,d\mu
    +\mathcal N_k^{(\geq2)},\\
    &\qquad k=2,4.
\end{aligned}
\]
The contour difference is understood on the universal cover. The single
integrals form the first term of the convergent Neumann series, and
$\mathcal N_k^{(\geq2)}$ is generated by repeated applications of
$\mathcal K_k$.

\section{The squared-eigenfunction identity}

Let $\psi=(\psi_1,\psi_2)^T$ be any column of a fundamental solution of the
Lax pair, and set
\[
    W_1=\psi_1^2,\qquad
    W_2=\psi_1\psi_2,\qquad
    W_3=\psi_2^2.
\]
Introduce the abbreviations
\[
    d=\lambda+x+\frac q2,
    \qquad
    \chi=qh+2\theta_\infty.
\]
The $x$-part of the Lax pair gives
\[
    \partial_x\psi_1=d\psi_1+\psi_2,
    \qquad
    \partial_x\psi_2=-\chi\psi_1-d\psi_2,
\]
and hence
\[
\begin{aligned}
    \partial_xW_1&=2dW_1+2W_2,\\
    \partial_xW_2&=-\chi W_1+W_3,\\
    \partial_xW_3&=-2\chi W_2-2dW_3.
\end{aligned}
\]
Similarly, the spectral system with the coefficients $a$, $b$, and $c$ from
(\ref{eqP4aArticle})--(\ref{eqP4cArticle}) gives
\[
\begin{aligned}
    \partial_\lambda W_1&=2aW_1+2bW_2,\\
    \partial_\lambda W_2&=cW_1+bW_3,\\
    \partial_\lambda W_3&=2cW_2-2aW_3.
\end{aligned}
\]

Fix the same branch of $\sqrt q$ as in
(\ref{eqP4SelfAdjointSubstitutionArticle}), and set
\[
\begin{aligned}
    Q_{IV}
    &=\frac{\sqrt q}{2\lambda}
      \bigl((h-\lambda)W_1-W_2\bigr),\\
    R_{IV}
    &=-\sqrt q\left(\lambda+2x+\frac{3q}{2}\right)W_1.
\end{aligned}
\]

\begin{proposition}
For every column solution of the Lax pair, the quadratic expressions defined
above satisfy the identity
\begin{equation}
    \frac{d^2}{dx^2}Q_{IV}
    =
    \left(
        x^2-\alpha
        +6xq
        +\frac{15}{4}q^2
        -\frac{3\beta}{2q^2}
    \right)Q_{IV}
    +
    \frac{d}{d\lambda}R_{IV}.
    \label{eqP4SquaredEigenfunctionIdentityArticle}
\end{equation}
\end{proposition}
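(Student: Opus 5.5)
The identity is linear in the triple $(W_1,W_2,W_3)$, and both $\partial_x$ and $\partial_\lambda$ act on this triple by the explicit linear systems just derived. Hence every expression in (\ref{eqP4SquaredEigenfunctionIdentityArticle}) can be written as $\sum_j c_j(\lambda,x)W_j$ with coefficients rational in $\lambda$ and polynomial in $q,q^{-1},h,x,\sqrt q$. It therefore suffices to prove equality of the coefficients of $W_1$, $W_2$ and $W_3$ separately. I will use no relation among the $W_j$ (such as $W_2^2=W_1W_3$), since the identity should hold coefficientwise.

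Write $Q_{IV}=\frac{\sqrt q}{2\lambda}\bigl(uW_1-W_2\bigr)$ with $u=h-\lambda$. I will first compute $\partial_x Q_{IV}$. Apply the product rule, using $(\sqrt q)'=\frac{q'}{2q}\sqrt q$ with $q'/q=2h+q+2x$ from the zero-curvature system, the formula for $h'$, and the $x$-system for the $W_j$. Writing the result as $\frac{\sqrt q}{2\lambda}\bigl(\alpha_1W_1+\alpha_2W_2+\alpha_3W_3\bigr)$ gives
\[
\alpha_1=\tfrac{q'}{2q}u+h'+2du+\chi,\qquad
\alpha_2=2u-\tfrac{q'}{2q},\qquad
\alpha_3=-1 .
\]
Differentiating once more in the same way gives the coefficients of $\partial_x^2Q_{IV}$. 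Each step eliminates $q'$ and $h'$ through the first-order system, so no second derivatives of $q$ appear; $q''$ never enters explicitly.

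On the other side, I will compute $\partial_\lambda R_{IV}=-\sqrt q\bigl(W_1+(\lambda+2x+\tfrac{3q}{2})(2aW_1+2bW_2)\bigr)$ using the spectral system. This contributes only to $W_1$ and $W_2$. The $W_3$ coefficient of $\partial_x^2 Q_{IV}$ is $\frac{\sqrt q}{2\lambda}\bigl(2\alpha_2-\alpha_3\tfrac{q'}{2q}+2d\alpha_3\bigr)$ up to the obvious terms. I will check first that this vanishes identically; this is a quick consistency test of the choice of $Q_{IV}$. Then I will match the $W_2$ and $W_1$ coefficients. Multiply through by $2\lambda/\sqrt q$, so that every term becomes a Laurent polynomial in $\lambda$ of low degree (from $\lambda^2$ down to $\lambda^0$, plus the $1/\lambda$ from $a$ and $b$). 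Then compare powers of $\lambda$. Throughout, $\alpha=2\theta_\infty-1$, $\beta=-8\theta_0^2$, and $U_{IV}=x^2-\alpha+6xq+\frac{15}{4}q^2+12\theta_0^2/q^2$.

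The main obstacle is the $W_1$ coefficient, specifically its $\lambda^0$ and $\lambda^{1}$ parts. These are where $h'$ (carrying $\theta_\infty$ and $\theta_0^2/q^2$) and the squares of $q'/q$ must combine exactly into $U_{IV}$. The term $-\theta_0^2/q^2$ coming from $c$ does not occur because $W_1$ has no $c$; instead the $4\theta_0^2/q^2$ from $h'$, together with the derivative of $h'$ (which produces $8\theta_0^2q'/q^3$), has to yield $12\theta_0^2/q^2$. I would organize this computation by substituting $h=2p-x-q/2$ only at the end, or by working symbolically in $(q,h,x)$ and reducing with the two first-order equations. I would also verify the final polynomial identity in $\lambda$ term by term, possibly with computer algebra as a check. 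The $\lambda^{-1}$ terms on the right, from $a$ and $b$ in $\partial_\lambda R_{IV}$, must cancel the $1/\lambda$ prefactor structure of $Q_{IV}$. This forces the specific constant $2x+\tfrac{3q}{2}$ in $R_{IV}$, and checking it serves as a second consistency test.
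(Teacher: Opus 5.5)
Your plan is correct and is essentially the paper's proof. Both compute $\partial_xQ_{IV}$ and $\partial_x^2Q_{IV}$ in the basis $W_1,W_2,W_3$, eliminate $q'$ and $h'$ through the first-order system, show that the $W_3$ coefficient vanishes, and match the $W_1,W_2$ coefficients against $\partial_\lambda R_{IV}=-\sqrt q\bigl([1+2La]W_1+2LbW_2\bigr)$; your $\alpha_1,\alpha_2,\alpha_3$ agree with the paper's first-derivative formula.

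There is one slip to fix. The $W_3$ coefficient of $\partial_x^2Q_{IV}$ is $\frac{\sqrt q}{2\lambda}\bigl(\alpha_2+\alpha_3\frac{q'}{2q}-2d\alpha_3\bigr)=\frac{\sqrt q}{2\lambda}\bigl(h-\zeta-2\lambda-(h+\zeta)+2(\lambda+\zeta)\bigr)=0$ with $\zeta=x+q/2$. The expression you wrote evaluates to $3(h-\zeta-2\lambda)\neq0$ instead. Also, after multiplying by $2\lambda/\sqrt q$, the $W_1$ comparison runs from $\lambda^3$ down to $\lambda^0$ with no negative powers remaining.
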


\begin{proof}
The equations for $q$ and $h$ give
\[
    (\sqrt q)'=\left(h+x+\frac q2\right)\sqrt q,
    \qquad
    h'=-h^2-2h(q+x)-2\theta_\infty
       -\frac{4\theta_0^2}{q^2}.
\]
Set
\[
    \zeta=x+\frac q2,
    \qquad
    L=\lambda+2x+\frac{3q}{2}.
\]
Differentiating the definition of $Q_{IV}$ once and substituting the
$x$-system for $W_j$ give
\[
\begin{aligned}
    \partial_xQ_{IV}
    =\frac{\sqrt q}{2\lambda}\Bigg[&
      \left(
        h(\lambda+\zeta)-2\lambda^2-3\zeta\lambda
        -\frac{4\theta_0^2}{q^2}
      \right)W_1\\
      &+(h-\zeta-2\lambda)W_2-W_3\Bigg].
\end{aligned}
\]
Differentiating once more, using the three $x$-equations again, and then
substituting the evolution equations for $q$ and $h$, we obtain after
collecting coefficients
\[
\begin{aligned}
    \partial_x^2Q_{IV}-U_{IV}Q_{IV}
    =-\sqrt q\Bigl(&[1+2La]W_1+2LbW_2\Bigr),
\end{aligned}
\]
where
\[
    U_{IV}=x^2-\alpha+6xq+\frac{15}{4}q^2
           -\frac{3\beta}{2q^2}.
\]
The coefficient of $W_3$ in this difference vanishes identically. On the
other hand, when $R_{IV}$ is differentiated with respect to $\lambda$, the
quantities $x$, $q$, and $h$ are held fixed and $\partial_\lambda L=1$.
Therefore, the first spectral equation for the squares gives
\[
\begin{aligned}
    \partial_\lambda R_{IV}
    &=-\sqrt q\left(W_1+L\partial_\lambda W_1\right)\\
    &=-\sqrt q\Bigl([1+2La]W_1+2LbW_2\Bigr).
\end{aligned}
\]
The right-hand sides of the last two identities coincide, proving
(\ref{eqP4SquaredEigenfunctionIdentityArticle}).
\end{proof}

\subsection*{Existence of the contour $\Gamma$}

\begin{lemma}[Cauchy obstruction]
Let $\widetilde{\mathbb C^*}$ denote the universal cover of
$\mathbb C^*=\mathbb C\setminus\{0\}$, and let $\mathcal D$ be a connected
component of its rapid-decay region. Let
$\gamma:\infty_a\to\infty_b$ be a path with two asymptotic branches in
$\mathcal D$, where $\infty_a$ and $\infty_b$ denote the corresponding
lifted directions as $|\lambda|\to\infty$, and let $u(\lambda,x)$ be a
column solution of the Lax pair continued along $\gamma$. Suppose that the
form $Q_{IV}(u(\lambda,x))\,d\lambda$ is holomorphic in a simply connected
domain $U$ containing $\gamma$ and closing arcs of large radius inside
$\mathcal D$, and that it tends to zero uniformly on these arcs so that the
arc integral tends to zero. Then
\begin{equation}
    \int_\gamma Q_{IV}(u(\lambda,x))\,d\lambda=0.
    \label{eqP4SameSectorZeroArticle}
\end{equation}
\end{lemma}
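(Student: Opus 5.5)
The plan is to close the two-ended path $\gamma$ by an arc at infinity inside $\mathcal D$ and then apply Cauchy's theorem on the simply connected domain $U$.

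First we truncate. For large $r$, let $\lambda_a(r)$ and $\lambda_b(r)$ be the points where the two asymptotic branches of $\gamma$ first meet the circle $|\lambda|=r$ on the lifted sheet. Let $\gamma_r$ be the finite part of $\gamma$ from $\lambda_a(r)$ to $\lambda_b(r)$. Since both branches lie in $\mathcal D$, the lifted directions $\infty_a$ and $\infty_b$ belong to the same component. On the universal cover, that component is a sector of the lifted argument bounded by two consecutive Stokes rays. Hence there is an arc $\sigma_r$ of $|\lambda|=r$ joining $\lambda_b(r)$ to $\lambda_a(r)$ that stays in $\mathcal D$ and does not wind around the origin. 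By hypothesis, the closed curve $\gamma_r+\sigma_r$ lies in $U$. The form $Q_{IV}(u)\,d\lambda$ is holomorphic there; it is single-valued because we work on the cover, which removes the branching of $\lambda^{\pm\theta_\infty}$ and $\lambda^{\theta_0}$. Since $U$ is simply connected, Cauchy's theorem gives
\[
\int_{\gamma_r}Q_{IV}(u)\,d\lambda=-\int_{\sigma_r}Q_{IV}(u)\,d\lambda .
\]

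Next we let $r\to\infty$. The right-hand side tends to zero by the assumed uniform decay on the closing arcs. Its arc length is $O(r)$, so strictly we need decay of $r\sup_{\sigma_r}|Q_{IV}|$. This follows in the intended application: on $\mathcal D$ the canonical asymptotics (\ref{eqP4CanonicalInfinityArticle}) give $|u_1|^2\lesssim |\lambda|^{c}e^{\pm2\operatorname{Re}\Theta_{IV}}$, which decays faster than any power. In the lemma itself we simply read ``the arc integral tends to zero'' as the hypothesis. For the left-hand side, $\gamma_r\to\gamma$, and the tails contribute $o(1)$ because the integrand decays rapidly along the branches in $\mathcal D$. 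This is the same rapid decay that makes $\int_\gamma$ convergent in the first place. Letting $r\to\infty$ gives (\ref{eqP4SameSectorZeroArticle}).

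The main obstacle is the geometric step: choosing $\sigma_r$ so that it stays inside a single component $\mathcal D$ on the cover and so that $\gamma_r\cup\sigma_r$ bounds a region in $U$ without enclosing the origin. The origin is not in $U$, and encircling it would introduce the monodromy $M_0$. I would handle this by parametrising $\mathcal D$ near infinity as $\{re^{i\varphi}:\varphi\in(\varphi_-,\varphi_+),\ r>R\}$ on the lifted argument. Then $\sigma_r$ is the arc between the two lifted arguments of the endpoints. The homotopy of $\gamma_r\cup\sigma_r$ to a point inside $U$ is exactly what the simple-connectedness hypothesis on $U$ provides. We also take care that the endpoints' limiting arguments lie strictly inside $(\varphi_-,\varphi_+)$, so that uniform decay holds on all of $\sigma_r$ and not merely near the Stokes rays.
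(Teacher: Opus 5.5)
Your proposal is correct and is essentially the paper's proof. You truncate both tails at $|\lambda|=\rho$, close the finite path by an arc inside $\mathcal D$, apply Cauchy's theorem in the simply connected domain $U$ to get $\int_{\gamma_\rho}Q_{IV}\,d\lambda=-\int_{A_\rho}Q_{IV}\,d\lambda$, and let $\rho\to\infty$ using the assumed vanishing of the arc integral and the convergence of the improper integral; your extra remarks (the $O(r)$ arc length, keeping the arc strictly inside the sector, not encircling the origin) are refinements that the paper absorbs into its hypotheses on $U$ and the closing arcs.
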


\begin{proof}
Truncate both tails at the circle $|\lambda|=\rho$ and denote the resulting
finite path by $\gamma_\rho$. Since its endpoints belong to the same component
$\mathcal D$, they can be joined by an arc $A_\rho\subset\mathcal D$ such
that the closed contour $\gamma_\rho+A_\rho$ is null-homologous in $U$.
Cauchy's theorem gives
\[
    \int_{\gamma_\rho}Q_{IV}\,d\lambda
    =-\int_{A_\rho}Q_{IV}\,d\lambda.
\]
The right-hand side tends to zero as $\rho\to\infty$, whereas the left-hand
side tends to the defining improper integral. This proves
(\ref{eqP4SameSectorZeroArticle}).
\end{proof}

A nontrivial two-ended contour has tails in distinct connected components of
the rapid-decay region on the universal cover. Different rays in the same
component do not count as distinct sectors.

Subdominant columns in different sectors are related by Stokes matrices.
Accordingly, the boundary term is removed by a rapid-decay cycle with
coefficients in the symmetric square of the local system defined by the Lax
pair.

Represent the universal cover as
\[
    \widetilde{\mathbb C^*}
    =\{(r,\varphi):r>0,\ \varphi\in\mathbb R\},
    \qquad
    \pi(r,\varphi)=re^{i\varphi}.
\]
On it, $\log\lambda=\log r+i\varphi$ is single-valued. Choose the four lifted
directions
\[
    \varphi_0=0,\qquad
    \varphi_1=\frac{\pi}{2},\qquad
    \varphi_2=\pi,\qquad
    \varphi_3=-\frac{\pi}{2}.
\]
For $j=0,1,2,3$, set
\begin{equation}
    \mathcal D_j(\delta)
    =\left\{(r,\varphi):
       \left|\varphi-\varphi_j\right|
       <\frac{\pi}{4}-\delta\right\},
    \qquad 0<\delta<\frac{\pi}{4}.
    \label{eqP4DecaySectorsArticle}
\end{equation}
For odd $j$, $\operatorname{Re}\Theta_{IV}\to-\infty$, whereas for even
$j$, $\operatorname{Re}\Theta_{IV}\to+\infty$. Denote by $u_j$ the
subdominant column of the jointly normalized matrix $\Phi_k$, continued
into $\mathcal D_j(\delta)$. In every closed
subsector it has the uniform asymptotic behavior
\begin{equation}
u_j\sim
\begin{cases}
\mathfrak d^{-1}e^{-\Theta_{IV}}
\bigl(O(\lambda^{-1}),\ 1+O(\lambda^{-1})\bigr)^T,
&j=0,2,\\[2mm]
\mathfrak d e^{\Theta_{IV}}
\bigl(1+O(\lambda^{-1}),\ O(\lambda^{-1})\bigr)^T,
&j=1,3.
\end{cases}
\label{eqP4SubdominantColumnsArticle}
\end{equation}

Choose $R>0$ and the point $\lambda_*=(R,0)$ on the cover. These data and
the paths below are independent of $x$. For each $j$, introduce the central
ray of $\mathcal D_j(\delta)$,
\[
    \ell_j=\{(r,\varphi_j):r\geq R\}
    \subset\mathcal D_j(\delta).
\]
The path $\gamma_j$ is oriented from large $r$ along $\ell_j$ toward
$\lambda_*$ and consists of two parts:
\begin{equation}
\begin{aligned}
    \gamma_j^{\infty}(r)&=(r,\varphi_j),
       &&+\infty>r\ge R,\\
    \gamma_j^0(t)&=(R,(1-t)\varphi_j),
       &&0\le t\le1,\\
    \gamma_j&=\gamma_j^{\infty}*\gamma_j^0.
\end{aligned}
\label{eqP4ExplicitGammaLegsArticle}
\end{equation}
For $j=0$, the second part degenerates to a point. All paths remain in
$r\ge R$ and therefore avoid the Fuchsian point $\lambda=0$. The column
$u_j$, initially defined on the tail in $\mathcal D_j(\delta)$, has a unique
analytic continuation along $\gamma_j$ to $\lambda_*$. The projections
of these paths and their lifts are shown in
Figure~\ref{figP4RapidDecayCycleArticle}.

\begin{figure}[htbp]
\centering
\begin{tikzpicture}[>=Latex,font=\scriptsize,scale=0.99,transform shape]
\begin{scope}[scale=0.78]
    \fill[black!5] (0,0)--(-45:3.25) arc (-45:45:3.25)--cycle;
    \fill[black!14] (0,0)--(45:3.25) arc (45:135:3.25)--cycle;
    \fill[black!5] (0,0)--(135:3.25) arc (135:225:3.25)--cycle;
    \fill[black!14] (0,0)--(225:3.25) arc (225:315:3.25)--cycle;
    \draw[->] (-3.55,0)--(3.55,0) node[right] {$\operatorname{Re}\lambda$};
    \draw[->] (0,-3.55)--(0,3.55) node[above] {$\operatorname{Im}\lambda$};
    \foreach \ang in {45,135,225,315}
        \draw[black,densely dotted] (0,0)--(\ang:3.20);
    \draw[gray,dash dot] (0,0) circle (1.18);
    \fill[white] (0,0) circle (0.11);
    \draw (0,0) circle (0.11) node[below right] {$0$};
    \coordinate (P) at (1.18,0);
    \fill (P) circle (1.6pt) node[below right] {$\lambda_*$};
    \draw[black,very thick,->] (3.25,0)--(P);
    \draw[black,very thick,dash pattern=on 6pt off 2pt,->]
        (0,3.25)--(0,1.18) arc (90:0:1.18);
    \draw[black,very thick,->]
        (-3.25,0)--(-1.18,0) arc (180:0:1.18);
    \draw[black,very thick,dash pattern=on 6pt off 2pt,->]
        (0,-3.25)--(0,-1.18) arc (-90:0:1.18);
    \node[right] at (3.08,0.20) {$\infty_0,u_0$};
    \node[above right] at (0.10,3.05) {$\infty_1,u_1$};
    \node[left] at (-3.08,0.20) {$\infty_2,u_2$};
    \node[below right] at (0.10,-3.05) {$\infty_3,u_3$};
    \node[gray] at (-1.55,-1.45) {$|\lambda|=R$};
    \node[font=\small] at (0,4.28) {projection onto the $\lambda$-plane};
\end{scope}
\begin{scope}[xshift=3.15cm,yshift=-0.15cm]
    \draw[->] (-0.35,0)--(6.55,0) node[right] {$\varphi$};
    \draw[->] (0,-0.55)--(0,4.05) node[above] {$\log(r/R)$};
    \node[align=center] at (3.0,-0.78)
        {$\lambda=0:\ \log(r/R)\to-\infty$};
    \draw (0,0.08)--(0,-0.08) node[below] {$-\pi/2$};
    \draw (1.5,0.08)--(1.5,-0.08) node[below] {$0$};
    \draw (3,0.08)--(3,-0.08) node[below] {$\pi/2$};
    \draw (4.5,0.08)--(4.5,-0.08) node[below] {$\pi$};
    \draw (6,0.08)--(6,-0.08) node[below] {$3\pi/2$};
    \coordinate (Ps) at (1.5,0);
    \fill (Ps) circle (1.6pt) node[above left] {$\lambda_*$};
    \draw[black,very thick,dash pattern=on 6pt off 2pt,->]
        (0,3.72)--(0,0.72) .. controls (0.25,0.25) and (0.95,0.12) .. (Ps);
    \draw[black,very thick,->] (1.5,3.72)--(Ps);
    \draw[black,very thick,dash pattern=on 6pt off 2pt,->]
        (3,3.72)--(3,0.72) .. controls (2.75,0.25) and (2.05,0.12) .. (Ps);
    \draw[black,very thick,->]
        (4.5,3.72)--(4.5,1.10) .. controls (3.90,0.42) and (2.50,0.18) .. (Ps);
    \node[above] at (0,3.72) {$\gamma_3$};
    \node[above] at (1.5,3.72) {$\gamma_0$};
    \node[above] at (3,3.72) {$\gamma_1$};
    \node[above] at (4.5,3.72) {$\gamma_2$};
    \node[font=\small] at (3.0,4.28)
        {lift to $\widetilde{\mathbb C^*}$};
\end{scope}
\end{tikzpicture}
\caption{The rapid-decay chain $\Gamma$. The left panel shows four oriented
branches $\gamma_j:\infty_j\to\lambda_*$, with the points $\infty_j$ in
pairwise distinct Stokes sectors. Light-gray sectors correspond to
$\operatorname{Re}\Theta_{IV}>0$ and the columns $u_0,u_2$; dark-gray
sectors correspond to $\operatorname{Re}\Theta_{IV}<0$ and the columns
$u_1,u_3$. Solid curves represent $\gamma_0,\gamma_2$, and dashed curves
represent $\gamma_1,\gamma_3$. The right panel shows the same chain on the
unfolded universal cover; all its points stay away from $\lambda=0$.}
\label{figP4RapidDecayCycleArticle}
\end{figure}

For a column $u=(u_1,u_2)^T$, introduce its symmetric square
\[
    u^{\odot2}=(u_1^2,u_1u_2,u_2^2)^T.
\]
\begin{samepage}
\begin{lemma}[Existence of a rapid-decay cycle]
There exist constants $c_0,c_1,c_2,c_3$, not all zero and depending only on
the monodromy data, such that the chain
\begin{equation}
    \Gamma=\sum_{j=0}^3c_j(\gamma_j,u_j^{\odot2})
    \label{eqP4RapidDecayCycleArticle}
\end{equation}
satisfies
\begin{equation}
    R_{IV}\big|_{\partial\Gamma}=0.
    \label{eqP4RapidDecayBoundaryArticle}
\end{equation}
More precisely, this boundary condition has two parts. Each asymptotic branch
$\gamma_j^\infty$ lies in the explicitly specified sector and approaches
the point at infinity along its central ray as $r\to+\infty$; moreover,
\[
    \lim_{r\to+\infty}
    R_{IV}\bigl(u_j(re^{i\varphi_j},x)\bigr)=0,
    \qquad j=0,1,2,3,
\]
where the sector--direction pairs are
\[
\begin{array}{c|c|c}
 j&\text{sector on }\widetilde{\mathbb C^*}&
      \text{asymptotic direction}\ \\ \hline
 0&\mathcal D_0(\delta)&\arg\lambda=0,\\
 1&\mathcal D_1(\delta)&\arg\lambda=\pi/2\\
 2&\mathcal D_2(\delta)&\arg\lambda=\pi\\
 3&\mathcal D_3(\delta)&\arg\lambda=-\pi/2.
\end{array}
\]
At the common finite vertex of the chain, one requires
\[
    \sum_{j=0}^3c_jR_{IV}(u_j)(\lambda_*,x)=0.
\]
The integral
\begin{equation}
    \int_\Gamma Q_{IV}\,d\lambda
    :=\sum_{j=0}^3c_j
      \int_{\gamma_j}Q_{IV}(u_j(\lambda,x))\,d\lambda
    \label{eqP4RapidDecayIntegralDefinitionArticle}
\end{equation}
converges absolutely and locally uniformly in $x$ wherever $q(x)\ne0$.
It may be differentiated twice with respect to $x$ under the integral sign.
Furthermore, every nonzero chain of this type has at least two nonzero
coefficients $c_j$; hence its support contains asymptotic branches in at
least two distinct sectors $\mathcal D_j(\delta)$.
\end{lemma}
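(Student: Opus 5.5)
The plan is to handle the two parts of the boundary condition separately and then reduce the vertex condition to a single linear equation in four unknowns.

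\emph{Vanishing at the tails.} By (\ref{eqP4SubdominantColumnsArticle}), on the central ray of $\mathcal D_0(\delta)$ or $\mathcal D_2(\delta)$ the first component of $u_j$ is $\mathfrak d^{-1}e^{-\Theta_{IV}}O(\lambda^{-1})$. On the central ray of $\mathcal D_1(\delta)$ or $\mathcal D_3(\delta)$ it is $\mathfrak d e^{\Theta_{IV}}(1+O(\lambda^{-1}))$. Along these rays $\operatorname{Re}\Theta_{IV}=\pm r^2/2+O(r)$, with the sign that makes $|e^{\mp2\Theta_{IV}}|\le Ce^{-r^2+Cr}$. Since $R_{IV}$ is $-\sqrt q\,(\lambda+2x+3q/2)\,u_{j,1}^2$, it is a polynomial of degree one in $\lambda$ times a square of the subdominant component. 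Hence $R_{IV}(u_j)\to0$ along each ray, uniformly for $x$ in compact pole-free sets with $q\ne0$, because $\mathfrak d$ and $\sqrt q$ are bounded there. The same estimate applied to $Q_{IV}$ shows $|Q_{IV}(u_j)|\le C(1+r)\,e^{-r^2+Cr}$ on the tails. The finite legs $\gamma_j^0$ are compact and avoid $\lambda=0$, so the integrand is continuous there. This gives absolute and locally uniform convergence of (\ref{eqP4RapidDecayIntegralDefinitionArticle}).

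\emph{Differentiation in $x$.} The columns $u_j$ are columns of $\Phi_k$. They satisfy $\partial_xu=B_{IV}u$ with $B_{IV}$ linear in $\lambda$. Therefore $\partial_x$ and $\partial_x^2$ of $Q_{IV}(u_j)$ are the same exponentials times polynomials in $\lambda$ of degree at most three, with coefficients that are locally bounded in $x$. The explicit formula for $\partial_xQ_{IV}$ in the proof of the Proposition shows this directly. A dominated bound $C(1+r)^3e^{-r^2+Cr}$, uniform on compact $x$-sets, then justifies differentiating twice under the integral sign. The paths are independent of $x$, so no boundary terms arise from moving endpoints.

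\emph{Vertex condition and existence.} The condition $\sum_jc_jR_{IV}(u_j)(\lambda_*,x)=0$ is one linear homogeneous equation in the four unknowns $(c_0,\dots,c_3)$, so nonzero solutions exist at each $x$. The main obstacle is making the $c_j$ independent of $x$ and dependent only on the monodromy data. I would not simply solve the equation pointwise. Instead I would use the Stokes relations. Continuation of each $u_j$ to $\lambda_*$ expresses it in a fixed basis, say the columns of $\Phi_1$, with constant coefficients built from $\widetilde S_k$ and $F_\infty$; this uses $\Phi_{k,x}=B_{IV}\Phi_k$ and $\widetilde S_k'=0$. The quantity $u^{\odot2}$ lives in a three-dimensional space of symmetric squares. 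The four vectors $u_j^{\odot2}$ at $\lambda_*$ therefore satisfy a nontrivial linear relation $\sum_jc_ju_j^{\odot2}=0$, whose coefficients are constant in $x$ and are polynomials in $s_k$ and $\alpha_\infty^{\pm1}$. Because $R_{IV}$ is linear in $u^{\odot2}$ (it involves only $W_1$), this relation makes the vertex sum vanish identically in $x$. It also shows that the homology class of $\Gamma$ is fixed as $x$ varies. Before committing to this, I would check that the relation is nontrivial, i.e.\ that the coefficient vector is nonzero. That is guaranteed by dimension counting ($4>3$).

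\emph{At least two nonzero coefficients.} Suppose only $c_j\ne0$. Then the vertex condition reads $R_{IV}(u_j)(\lambda_*,x)=0$, which would have to hold for all admissible $x$. Equivalently $u_{j,1}(\lambda_*,x)\equiv0$, since $\lambda_*+2x+3q/2\not\equiv0$. By the $x$-equation $\partial_x\psi_1=d\psi_1+\psi_2$, this forces $u_{j,2}\equiv0$ as well, contradicting $\det\Phi_k=1$. In homology terms, the single-tail chain $(\gamma_j,u_j^{\odot2})$ has a nonzero boundary at $\lambda_*$. Hence every nonzero admissible chain has at least two nonzero $c_j$. The corresponding tails lie in distinct $\mathcal D_j(\delta)$, as the Cauchy obstruction lemma requires for a nonvanishing integral.
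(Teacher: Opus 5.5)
Your proposal is correct and follows the paper's proof in all essentials. The four vectors $u_j^{\odot2}$ are dependent in the three-dimensional space $\operatorname{Sym}^2\mathbb C^2$ with $x$-independent coefficients; you obtain constancy by writing each $u_j$ in the basis $\Phi_1$ through the constant $\widetilde S_k$ and $F_\infty$, whereas the paper factors the minors into the constant Wronskians $\Delta_{ab}$ and lets the relation persist under the linear symmetric-square system. As in the paper, $R_{IV}$ vanishes on the tails by the subdominant asymptotics, and dominated convergence justifies the two $x$-derivatives. Your argument for at least two nonzero $c_j$ via the $x$-equation just unwinds the paper's observation that $u_j^{\odot2}\ne0$. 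It tacitly uses $\lambda_*+2x+3q/2\not\equiv0$, which holds because no function $q=-\tfrac43x+\mathrm{const}$ solves $P_{IV}$.
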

\end{samepage}

\begin{proof}
The vectors $u_j^{\odot2}$ belong to the three-dimensional space
$\operatorname{Sym}^2\mathbb C^2$. Thus the four vectors are linearly
dependent. A nonzero linear relation cannot contain exactly one nonzero
coefficient because $u_j^{\odot2}\ne0$. Its support therefore contains at
least two indices, and the corresponding asymptotic branches lie in distinct
sectors $\mathcal D_j(\delta)$. The coefficients may be chosen as
\begin{equation}
    c_j=(-1)^j
    \det\bigl(w_0,\ldots,\widehat w_j,\ldots,w_3\bigr),
    \qquad
    w_j=u_j^{\odot2}(\lambda_*,x_0),
    \label{eqP4RapidDecayCoefficientsArticle}
\end{equation}
where $x_0$ is fixed. If all minors in
(\ref{eqP4RapidDecayCoefficientsArticle}) vanish, then the rank of the family
$w_j$ is less than three, and one chooses any nonzero vector in the kernel of
the matrix $(w_0,w_1,w_2,w_3)$. Moreover,
\[
\det(u_a^{\odot2},u_b^{\odot2},u_c^{\odot2})
=\det(u_a,u_b)\det(u_a,u_c)\det(u_b,u_c),
\]
so, upon setting $\Delta_{ab}=\det(u_a,u_b)$, we obtain
\[
\begin{aligned}
 c_0&=\Delta_{12}\Delta_{13}\Delta_{23},&
 c_1&=-\Delta_{02}\Delta_{03}\Delta_{23},\\
 c_2&=\Delta_{01}\Delta_{03}\Delta_{13},&
 c_3&=-\Delta_{01}\Delta_{02}\Delta_{12}.
\end{aligned}
\]
These are Wronskians of jointly normalized columns and functions of the
reference monodromy data. The factors $\mathfrak d^{\pm1}$ in the column asymptotics
are essential for their constancy. Since $\operatorname{tr}A_{IV}=\operatorname{tr}B_{IV}=0$,
we have $\partial_\lambda\Delta_{ab}=\partial_x\Delta_{ab}=0$.
Consequently, the coefficients $c_j$ are independent of both $\lambda$ and
$x$.

Each $u_j^{\odot2}$ satisfies the same symmetric-square system. Hence the
relation
\begin{equation}
    \sum_{j=0}^3c_j u_j^{\odot2}(\lambda,x)=0,
    \label{eqP4SymmetricSquareRelationArticle}
\end{equation}
imposed at $(\lambda_*,x_0)$ is preserved under continuation in $\lambda$ and
$x$. At the common finite vertex $\lambda_*$ it gives
\[
    \sum_{j=0}^3c_jR_{IV}(u_j)(\lambda_*,x)=0.
\]
Along each branch $\gamma_j^\infty\subset\mathcal D_j(\delta)$, with
$\lambda=re^{i\varphi_j}$ and $r\to+\infty$,
(\ref{eqP4SubdominantColumnsArticle}) implies
$R_{IV}(u_j(re^{i\varphi_j},x))\to0$. This proves
(\ref{eqP4RapidDecayBoundaryArticle}).

On the central ray of $\mathcal D_j(\delta)$,
$\cos(2\varphi_j)=(-1)^j$. If $x$ ranges over a compact set $K$, then for
sufficiently large $r$,
\begin{equation}
\begin{aligned}
 \operatorname{Re}\Theta_{IV}(re^{i\varphi_j},x)
 &\ge \frac14r^2,&&j=0,2,\\
 \operatorname{Re}\Theta_{IV}(re^{i\varphi_j},x)
 &\le-\frac14r^2,&&j=1,3,
\end{aligned}
\qquad x\in K.
\label{eqP4RapidDecayPhaseEstimateArticle}
\end{equation}
The linear term $x\lambda$ and the logarithmic term in the phase do not alter
these estimates, because they are of orders $O_K(r)$ and $O(\log r)$.
Equations (\ref{eqP4SubdominantColumnsArticle}) and
(\ref{eqP4RapidDecayPhaseEstimateArticle}) yield, on every tail, the estimate
\[
    |Q_{IV}(u_j)|+|R_{IV}(u_j)|
    \le C_Kr^N e^{-r^2/2}
\]
for some $C_K,N>0$. One or two differentiations with respect to $x$ introduce
only additional powers of $r$ through the $x$-part of the Lax pair. The same
integrable majorant applies to $\partial_xQ_{IV}$ and
$\partial_x^2Q_{IV}$. The finite arcs $\gamma_j^0$ are compact and avoid
$\lambda=0$, so the integrands and their $x$-derivatives are bounded there.
The dominated convergence theorem gives absolute and locally uniform
convergence and permits two differentiations under the integral sign.
\end{proof}

In the formula below, $\Gamma$ is not an ordinary path in the $\lambda$-plane
but a chain with coefficients in the symmetric square of the local system of
Lax-pair solutions. Its branches $\gamma_j^\infty$ lie in the sectors
$\mathcal D_j(\delta)$ of the universal cover
$\widetilde{\mathbb C^*}$ and approach infinity as $r\to+\infty$ along the
rays $\arg\lambda=0,\pi/2,\pi,-\pi/2$. On $\gamma_j^\infty$, the
subdominant column $u_j$ is chosen so that
\[
    R_{IV}(u_j(re^{i\varphi_j},x))\longrightarrow0,
    \qquad r\longrightarrow+\infty.
\]
For generic $\theta_0$ and $\theta_\infty$, this cover is not two-sheeted:
the power factors $\lambda^{\theta_0\sigma_3}$ and
$\lambda^{-\theta_\infty\sigma_3}$ have arbitrary monodromy. A finite-sheeted
cover, in particular a two-sheeted one, occurs only for special rational or
half-integral exponents and is not assumed here.

Integrating (\ref{eqP4SquaredEigenfunctionIdentityArticle}) over the chain
(\ref{eqP4RapidDecayCycleArticle}) now gives
\[
\begin{aligned}
    \frac{d^2}{dx^2}\int_\Gamma Q_{IV}\,d\lambda
    &-U_{IV}(x)\int_\Gamma Q_{IV}\,d\lambda\\
    &=\sum_{j=0}^3c_j
      \int_{\gamma_j}\partial_\lambda R_{IV}(u_j)\,d\lambda
      =R_{IV}\big|_{\partial\Gamma}=0.
\end{aligned}
\]
The formula
\begin{equation}
    y(x)=\int_\Gamma Q_{IV}\,d\lambda,
    \qquad
    v(x)=\sqrt{q(x)}\,y(x)
    \label{eqP4LinearizedIntegralAnsatzArticle}
\end{equation}
defines solutions of the normal-form and original linearized equations. For
special monodromy data, the cycle may contain two or three nonzero branches;
by the Cauchy-obstruction lemma, they approach infinity in at least two
sectors.

\subsection*{A second integral solution and an independence criterion}

The integral associated with a rapid-decay homology class is its period.
To construct a second solution, choose another chain class
$\widetilde\Gamma$ that also
satisfies the boundary conditions
\[
 R_{IV}\big|_{\partial\widetilde\Gamma}=0,
 \qquad
 R_{IV}(\widetilde u_n(re^{in\pi/2},x))\longrightarrow0
 \quad(r\to+\infty)
\]
along each asymptotic branch
$\widetilde\gamma_n^\infty\subset\widehat{\mathcal D}_n(\delta)$. The chain
$\widetilde\Gamma$ cannot be obtained by a simple deformation of $\Gamma$
with fixed asymptotic directions: such a deformation preserves the
rapid-decay class and the same period. One must change the pair of lifted
asymptotic sectors or pass to the next sheet of the universal cover.

We now specify two such cycles explicitly. For $n\in\mathbb Z$, let
\[
 \widehat{\mathcal D}_n(\delta)
 =\left\{(r,\varphi):
 \left|\varphi-\frac{n\pi}{2}\right|
 <\frac{\pi}{4}-\delta\right\}
\]
be the lifted sector centered at $n\pi/2$, and let $u_n$ be the canonical
subdominant column in that sector. Choose the sectors and the asymptotic
directions of the two cycles as follows:
\[
\begin{aligned}
 \Gamma_A:\quad&
 \widehat{\mathcal D}_{-1}\ [\arg\lambda=-\pi/2],\quad
 \widehat{\mathcal D}_{0}\ [\arg\lambda=0],\\
 &\widehat{\mathcal D}_{1}\ [\arg\lambda=\pi/2],\quad
 \widehat{\mathcal D}_{2}\ [\arg\lambda=\pi];\\[1mm]
 \Gamma_B:\quad&
 \widehat{\mathcal D}_{1}\ [\arg\lambda=\pi/2],\quad
 \widehat{\mathcal D}_{2}\ [\arg\lambda=\pi],\\
 &\widehat{\mathcal D}_{3}\ [\arg\lambda=3\pi/2],\quad
 \widehat{\mathcal D}_{4}\ [\arg\lambda=2\pi].
\end{aligned}
\]
Thus the two cycles have asymptotic branches in the common sectors
$\widehat{\mathcal D}_{1}$ and $\widehat{\mathcal D}_{2}$, while
$\widehat{\mathcal D}_{-1},\widehat{\mathcal D}_{0}$ occur only in the
first cycle and $\widehat{\mathcal D}_{3},\widehat{\mathcal D}_{4}$ only in
the second.

More precisely, the asymptotic parts of the branches are
\[
 \gamma_n^{A,\infty}(r)=(r,n\pi/2),\quad n=-1,0,1,2,
 \qquad +\infty>r\geq R,
\]
\[
 \gamma_n^{B,\infty}(r)=(r,n\pi/2),\quad n=1,2,3,4,
 \qquad +\infty>r\geq R.
\]
Hence, for sufficiently large $r$, each branch lies in the explicitly
specified sector $\widehat{\mathcal D}_n(\delta)$ and approaches infinity
along its central ray.

Let the branches $\gamma_n^A$ of the first cycle meet at the vertex
$\lambda_*^A=(R,0)$ and the branches $\gamma_n^B$ of the second cycle at
$\lambda_*^B=(R,\pi)$ on the universal cover. Linear relations among the
four symmetric squares at these vertices determine the coefficients $a_n$
and $b_n$ and the chains
\[
 \Gamma_A=\sum_{n=-1}^{2}a_n(\gamma_n^A,u_n^{\odot2}),
 \qquad
 \Gamma_B=\sum_{n=1}^{4}b_n(\gamma_n^B,u_n^{\odot2}).
\]
In each cycle, the coefficients are selected by
(\ref{eqP4RapidDecayCoefficientsArticle}) applied to the corresponding four
columns. Thus the finite boundary values of $R_{IV}$ cancel, while along
every asymptotic branch specified above,
\[
 R_{IV}(u_n(re^{in\pi/2},x))\longrightarrow0,
 \qquad r\longrightarrow+\infty.
\]
Below we consider the generic case
\[
 a_{-1}a_0b_3b_4\ne0,
\]
in which each cycle contains two asymptotic sectors absent from the other.
The projections of some branches onto the $\lambda$-plane coincide, but
their lifts centered at $-\pi/2$ and $3\pi/2$, and at $0$ and $2\pi$,
belong to different sheets. The two cycles are shown in
Figure~\ref{figP4TwoIndependentCyclesArticle}.

\begin{figure}[htbp]
\centering
\begin{tikzpicture}[>=Latex,font=\scriptsize]
\begin{scope}[xshift=0.2cm,scale=0.78]
    \draw[->] (-0.45,0)--(5.05,0) node[right] {$\varphi$};
    \draw[->] (0,-0.35)--(0,3.75) node[above] {$\log(r/R)$};
    \coordinate (PA) at (1.4,0);
    \fill (PA) circle (1.7pt) node[below] {$\lambda_*^A$};
    \draw[very thick,dash pattern=on 5pt off 2pt,->]
        (0,3.25)--(0,0.85) .. controls (0.25,0.28) and (0.85,0.12) .. (PA);
    \draw[very thick,->] (1.4,3.25)--(PA);
    \draw[very thick,dash pattern=on 5pt off 2pt,->]
        (2.8,3.25)--(2.8,0.85) .. controls (2.55,0.28) and (1.95,0.12) .. (PA);
    \draw[very thick,->]
        (4.2,3.25)--(4.2,1.05) .. controls (3.55,0.28) and (2.25,0.12) .. (PA);
    \node[above] at (0,3.25) {$\widehat{\mathcal D}_{-1}$};
    \node[above] at (1.4,3.25) {$\widehat{\mathcal D}_{0}$};
    \node[draw,inner sep=1.5pt,above] at (2.8,3.25)
        {$\widehat{\mathcal D}_{1}$};
    \node[draw,inner sep=1.5pt,above] at (4.2,3.25)
        {$\widehat{\mathcal D}_{2}$};
    \node[below] at (0,0) {$-\pi/2$};
    \node[below] at (1.4,0) {$0$};
    \node[below] at (2.8,0) {$\pi/2$};
    \node[below] at (4.2,0) {$\pi$};
    \node[font=\small] at (2.1,4.28) {cycle $\Gamma_A$};
\end{scope}
\begin{scope}[xshift=8.1cm,scale=0.78]
    \draw[->] (-0.45,0)--(5.05,0) node[right] {$\varphi$};
    \draw[->] (0,-0.35)--(0,3.75) node[above] {$\log(r/R)$};
    \coordinate (PB) at (1.4,0);
    \fill (PB) circle (1.7pt) node[below] {$\lambda_*^B$};
    \draw[very thick,dash pattern=on 5pt off 2pt,->]
        (0,3.25)--(0,0.85) .. controls (0.25,0.28) and (0.85,0.12) .. (PB);
    \draw[very thick,->] (1.4,3.25)--(PB);
    \draw[very thick,dash pattern=on 5pt off 2pt,->]
        (2.8,3.25)--(2.8,0.85) .. controls (2.55,0.28) and (1.95,0.12) .. (PB);
    \draw[very thick,->]
        (4.2,3.25)--(4.2,1.05) .. controls (3.55,0.28) and (2.25,0.12) .. (PB);
    \node[draw,inner sep=1.5pt,above] at (0,3.25)
        {$\widehat{\mathcal D}_{1}$};
    \node[draw,inner sep=1.5pt,above] at (1.4,3.25)
        {$\widehat{\mathcal D}_{2}$};
    \node[above] at (2.8,3.25) {$\widehat{\mathcal D}_{3}$};
    \node[above] at (4.2,3.25) {$\widehat{\mathcal D}_{4}$};
    \node[below] at (0,0) {$\pi/2$};
    \node[below] at (1.4,0) {$\pi$};
    \node[below] at (2.8,0) {$3\pi/2$};
    \node[below] at (4.2,0) {$2\pi$};
    \node[font=\small] at (2.1,4.28) {cycle $\Gamma_B$};
\end{scope}
\end{tikzpicture}
\caption{Two rapid-decay cycles on the unfolded universal cover. Boxes mark
the common sectors $\widehat{\mathcal D}_{1}$ and
$\widehat{\mathcal D}_{2}$. The first cycle has additional asymptotic
branches in $\widehat{\mathcal D}_{-1},\widehat{\mathcal D}_{0}$, and the
second in $\widehat{\mathcal D}_{3},\widehat{\mathcal D}_{4}$. Their sets
of asymptotic sectors therefore differ by at least two sectors.}
\label{figP4TwoIndependentCyclesArticle}
\end{figure}

Set $\Gamma^{(1)}=\Gamma_A$ and $\Gamma^{(2)}=\Gamma_B$, and define
\[
 y_\nu(x)=\int_{\Gamma^{(\nu)}}Q_{IV}\,d\lambda,
 \qquad \nu=1,2.
\]
Both integral solutions satisfy (\ref{eqLinP4SelfAdjointArticle}). They are linearly
independent if and only if their Wronskian at one, and hence every, point
$x_0$,
\[
\begin{aligned}
 I_\nu&=\int_{\Gamma^{(\nu)}}Q_{IV}(\lambda,x_0)\,d\lambda,\\
 J_\nu&=\int_{\Gamma^{(\nu)}}
          \partial_xQ_{IV}(\lambda,x_0)\,d\lambda,
 \qquad \nu=1,2,
\end{aligned}
\]
\[
 \mathcal W[\Gamma^{(1)},\Gamma^{(2)}]
 :=y_1(x_0)y_2'(x_0)-y_1'(x_0)y_2(x_0)
 =I_1J_2-J_1I_2
\]
is nonzero. Differentiation under the integral sign was justified in the
rapid-decay-cycle lemma. The Wronskian is independent of $x$ because
(\ref{eqLinP4SelfAdjointArticle}) contains no first derivative. Thus, to
construct a fundamental pair of solutions, it is enough to select a second cycle class,
verify the inequality at one point, and set
\[
 v_1=\sqrt q\,y_1,
 \qquad
 v_2=\sqrt q\,y_2.
\]
For $q\ne0$,
$W[v_1,v_2]=qW[y_1,y_2]\ne0$.

\begin{proposition}[Conditional generic nondegeneracy]
Let $\mathcal M$ be a connected coordinate neighborhood in the smooth part
of the monodromy manifold for fixed $\theta_0,\theta_\infty$, on which
the inverse Riemann--Hilbert problem at $x_0$ is holomorphically solvable
in the chosen gauge and $q(x_0)$ is finite. Suppose that
the fixed lifted contours $\Gamma_A,\Gamma_B$ remain admissible throughout
$\mathcal M$, that $q(x_0)\ne0$, and that the minors used to normalize the
cycle coefficients do not vanish. Then the Wronskian
$\mathcal W(\mathbf m)$ is holomorphic in the local monodromy coordinates
$\mathbf m\in\mathcal M$. If $\mathcal W(\mathbf m_*)\ne0$ at one point
$\mathbf m_*\in\mathcal M$, its zero set is a proper analytic subset
$\mathcal Z\subset\mathcal M$, and for
$\mathbf m\in\mathcal M\setminus\mathcal Z$ the integral solutions $y_A,y_B$ form a
fundamental pair.
\end{proposition}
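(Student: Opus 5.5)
The plan is to show that each ingredient of $\mathcal W=I_AJ_B-J_AI_B$ is holomorphic in $\mathbf m$, and then to apply the identity theorem on the connected manifold $\mathcal M$.

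\emph{Step 1: holomorphy of the Lax-pair data.} By hypothesis the inverse Riemann--Hilbert problem at $x_0$ is holomorphically solvable on $\mathcal M$. Hence $q(x_0)$, $p(x_0)$ and $h(x_0)$ are holomorphic functions of $\mathbf m$, and so are the coefficient matrices $A_{IV}$ and $B_{IV}$ at $x=x_0$. Since $q(x_0)\ne0$ on $\mathcal M$, a branch of $\sqrt{q(x_0)}$ can be fixed holomorphically on $\mathcal M$. This uses only the fact that $\mathcal M$ is a coordinate neighborhood; if needed, I would shrink it to a simply connected chart and then extend along paths at the end. The residual $\mathcal R$ depends holomorphically on $(\lambda,\mathbf m)$, with the bound $\mathcal R=O(\lambda^{-2})$ holding locally uniformly in $\mathbf m$.

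\emph{Step 2: holomorphy of the canonical columns and the cycle coefficients.} The Volterra construction of Section~5 has a Neumann series that converges uniformly in $\mathbf m$ on compacta, because the constants $C_\eta$ and $H$ can be chosen locally uniformly. Therefore $Z_k$, and hence each subdominant column $u_n$, is holomorphic in $(\lambda,\mathbf m)$ on the tails. Continuation along the fixed finite arcs $\gamma_n^0$ solves a linear ODE with holomorphic parameters, so it preserves holomorphy. The coefficients $a_n,b_n$ are products of Wronskians $\Delta_{ab}=\det(u_a,u_b)$, as in the rapid-decay-cycle lemma. They are therefore holomorphic, and they do not vanish identically by the minor assumption.

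\emph{Step 3: holomorphy of the periods.} On each tail, the estimate $|Q_{IV}|+|\partial_xQ_{IV}|\le C r^Ne^{-r^2/2}$ from the rapid-decay-cycle lemma holds with constants uniform for $\mathbf m$ in compact subsets of $\mathcal M$. I expect this to be the main technical point. The asymptotics \eqref{eqP4SubdominantColumnsArticle} must be uniform in $\mathbf m$, and they follow from the uniform Volterra bounds in Step 2. The admissibility of the fixed lifted contours is exactly what rules out a degeneration of the estimates. For the finite arcs the integrands are jointly continuous, hence locally bounded. Morera's theorem, or differentiation under the integral sign combined with dominated convergence, then shows that $I_\nu$ and $J_\nu$ are holomorphic in $\mathbf m$. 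Here $J_\nu$ is the integral of $\partial_xQ_{IV}$, and $\partial_xQ_{IV}$ is expressed through the $x$-system with holomorphic coefficients. Consequently $\mathcal W(\mathbf m)=I_AJ_B-J_AI_B$ is holomorphic on $\mathcal M$.

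\emph{Step 4: conclusion.} Suppose $\mathcal W(\mathbf m_*)\ne0$. Since $\mathcal M$ is connected and $\mathcal W$ is holomorphic, $\mathcal W\not\equiv0$ by the identity theorem. Its zero set $\mathcal Z$ is therefore a proper analytic subset: a hypersurface, or empty. For $\mathbf m\notin\mathcal Z$, the Wronskian of $y_A$ and $y_B$ is nonzero at $x_0$. Because \eqref{eqLinP4SelfAdjointArticle} has no first-derivative term, this Wronskian is independent of $x$, so $y_A,y_B$ form a fundamental pair. The same then holds for $v_A=\sqrt q\,y_A$ and $v_B=\sqrt q\,y_B$, via $W[v_1,v_2]=qW[y_1,y_2]$.
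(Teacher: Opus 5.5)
Your proposal is correct and follows the paper's proof. The common steps are: holomorphy of the canonical columns and of the minor-based cycle coefficients, locally uniform tail estimates that make $I_\nu,J_\nu$ and hence $\mathcal W$ holomorphic, and then the identity theorem together with the $x$-independence of the Wronskian. The one difference is that you obtain the holomorphic dependence of the columns from the locally uniform Neumann series of Section~5 rather than citing the Riemann--Hilbert correspondence, which is a more self-contained version of the same step.
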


\begin{proof}
By the standard local holomorphic dependence in the isomonodromic
Riemann--Hilbert correspondence
\cite{JimboMiwa1981,FokasItsKapaevNovokshenov2006}, the canonical columns of
the Lax pair depend holomorphically on $\mathbf m$.
Formula (\ref{eqP4RapidDecayCoefficientsArticle}) and the selected nonzero
normalization give the same dependence for the cycle coefficients. Estimate
(\ref{eqP4RapidDecayPhaseEstimateArticle}) is locally uniform in $\mathbf m$,
so the integrals $I_\nu,J_\nu$ and their determinant $\mathcal W$ are
holomorphic. The zero set of a holomorphic function that is not identically
zero is a proper analytic subset. Outside $\mathcal Z$, the nonzero constant Wronskian
gives a fundamental pair of solutions.
\end{proof}

Thus distinct asymptotic sectors exclude the identically vanishing mechanism
from Cauchy's theorem, while the proposition reduces generic nondegeneracy to
one nonzero value of $\mathcal W$. The computation below provides numerical
evidence for this hypothesis. A rigorous verification requires an exact
nonzero value or an error enclosure excluding zero, with asymptotic
initialization and infinite-tail errors included.

\subsection*{Numerical implementation of the integral representation}

For a direct numerical evaluation of
(\ref{eqP4LinearizedIntegralAnsatzArticle}), take
\[
\begin{gathered}
 x_0=0.23+0.07i,\qquad q(x_0)=1.17+0.19i,\qquad
 p(x_0)=0.41-0.16i,\\
\theta_0=0.37+0.11i,\qquad
 \theta_\infty=-0.23+0.07i.
\end{gathered}
\]
Here $i^2=-1$. The asymptotic tails were truncated at
$|\lambda|=R=4.8$, and the common finite vertex was $\lambda_*=1.45$.
Subdominant columns with the asymptotics
(\ref{eqP4SubdominantColumnsArticle}) were prescribed on the four rays
$\arg\lambda=0,\pi/2,\pi,-\pi/2$ and then continued along the paths
(\ref{eqP4ExplicitGammaLegsArticle}). At $x=x_0$, the normalized kernel
vector of the matrix of their symmetric squares was
\[
\begin{aligned}
c_0&=0.7173299462,&
c_1&=0.1234205969-0.5349056566i,\\
c_2&=0.1664299785+0.2966151084i,&
c_3&=0.2510505136-0.0733159090i.
\end{aligned}
\]
These coefficients were kept fixed for all subsequent values of $x$.

For $x=x_0+s$, $-0.048\le s\le0.048$, a uniform grid of 33 points with
step $\Delta s=0.003$ was used. The background quantities $q,h$ and the
Lax-pair columns were continued by the explicit eighth-order Runge--Kutta
method DOP853. Each integral
\[
 I_j(x)=\int_{\gamma_j}Q_{IV}(u_j(\lambda,x))\,d\lambda
\]
was evaluated by adaptive quadrature, separately on the radial segment and
the arc. The values of the linearized solution were defined by
\[
 y(x)=\sum_{j=0}^3c_jI_j(x),\qquad
 v(x)=\sqrt{q(x)}\,y(x),
\]
with the normalization $y(x_0)=1$.

Derivatives of the tabulated values were computed by fourth-order five-point
formulas:
\[
\begin{aligned}
 D_{\Delta s}y_i
 &=\frac{y_{i-2}-8y_{i-1}+8y_{i+1}-y_{i+2}}{12\Delta s},\\
 D_{\Delta s}^2y_i
 &=\frac{-y_{i+2}+16y_{i+1}-30y_i+16y_{i-1}-y_{i-2}}
 {12(\Delta s)^2}.
\end{aligned}
\]
For the normal-form equation, the relative residual was
\[
 \varepsilon_y=
 \frac{\max_i|D_{\Delta s}^2y_i-U_{IV}(x_i)y_i|}
 {\max_i\max\{|D_{\Delta s}^2y_i|,|U_{IV}(x_i)y_i|\}}
 =1.28\cdot10^{-7}.
\]
Denote the coefficient of $v$ in parentheses in
(\ref{eqLinP4ScalarArticle}) by
\[
 G_{IV}(x)=-\frac{(q')^2}{2q^2}+\frac92q^2+8xq
 +2(x^2-\alpha)-\frac{\beta}{q^2}.
\]
For $v_i=\sqrt{q_i}y_i$, we obtained
\[
 \varepsilon_v=
 \frac{\max_i|D_{\Delta s}^2v_i-(q'_i/q_i)D_{\Delta s}v_i
                   -G_{IV}(x_i)v_i|}
 {\max_i\max\{|D_{\Delta s}^2v_i|,
 |(q'_i/q_i)D_{\Delta s}v_i|,|G_{IV}(x_i)v_i|\}}
 =7.62\cdot10^{-8}.
\]
The maximum relative error in
(\ref{eqP4SymmetricSquareRelationArticle}) over the entire grid was
$1.84\cdot10^{-11}$, while the contribution of the truncated tails at
$|\lambda|=R$ was $1.15\cdot10^{-10}$. The result is shown in
Figure~\ref{figP4IntegralSolutionNumericsArticle}.

\begin{figure}[htbp]
\centering
\includegraphics[width=\textwidth]{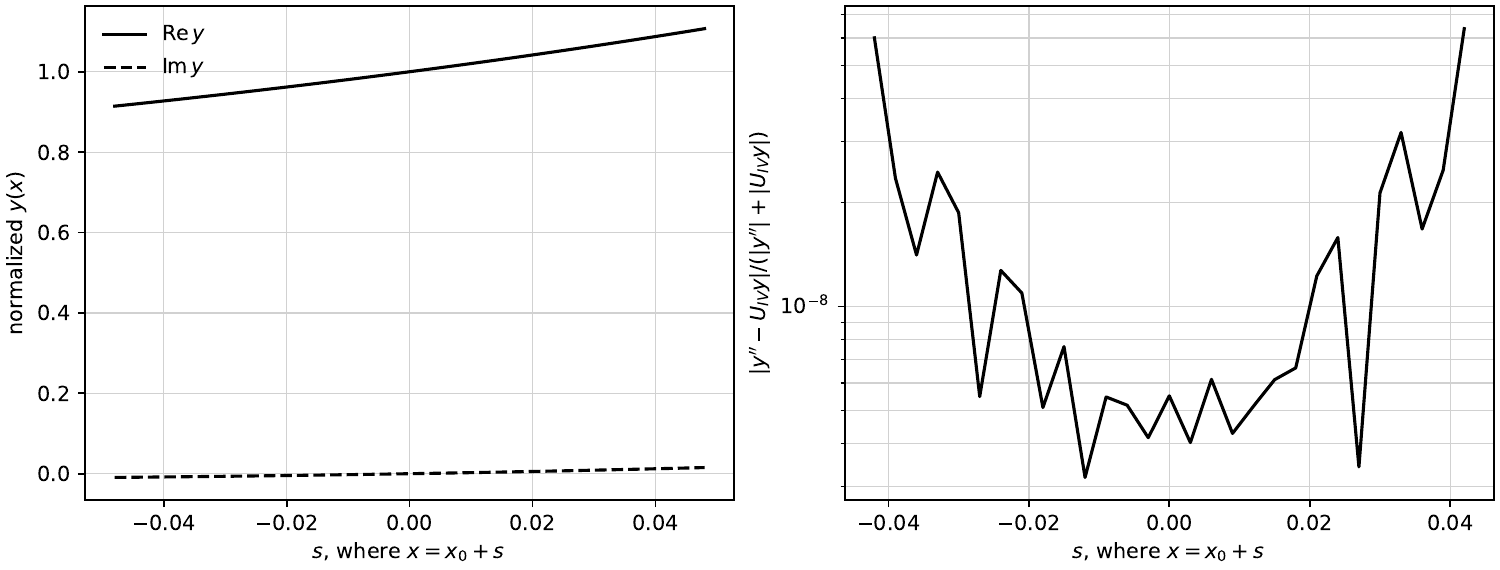}
\caption{Left: the solution $y(x)$ computed by direct quadrature of the
integral representation (\ref{eqP4LinearizedIntegralAnsatzArticle}) and
normalized by $y(x_0)=1$. Right: the pointwise relative residual of the
linearized equation in normal form.}
\label{figP4IntegralSolutionNumericsArticle}
\end{figure}
\FloatBarrier

\paragraph{Numerical verification of linear independence.}
For the second cycle we used the directions
$\arg\lambda=\pi/2,\pi,3\pi/2,2\pi$ and the common vertex
$\lambda_*^B=(1.45,\pi)$ on the universal cover. For the first cycle, the
directions were $-\pi/2,0,\pi/2,\pi$ and the common vertex was
$\lambda_*^A=(1.45,0)$. The normalized coefficients of the two linear
relations among the symmetric squares were
\[
\begin{aligned}
a_{-1}&=0.2510505139-0.0733159091i,
&a_0&=0.7173299452,\\
a_1&=0.1234205973-0.5349056572i,
&a_2&=0.1664299786+0.2966151089i,
\end{aligned}
\]
\[
\begin{aligned}
b_1&=0.3048535299+0.1143158736i,
&b_2&=-0.2533922373+0.0528314505i,\\
b_3&=0.9045628299,
&b_4&=-0.0925110192-0.0143253746i.
\end{aligned}
\]

The derivatives of the integral solutions were evaluated directly from
\[
 y_\nu'(x)=\int_{\Gamma^{(\nu)}}\partial_xQ_{IV}\,d\lambda,
 \qquad \nu=1,2,
\]
rather than by numerical differentiation of $y_\nu$. If
\[
 F=(h-\lambda)u_1^2-u_1u_2,
 \qquad u_x=B_{IV}u,
\]
then the total derivative of the integrand along a simultaneous solution of
the Lax pair has the explicit form
\[
 \partial_xQ_{IV}
 =\frac{\sqrt q}{2\lambda}\left\{
 \frac{q'}{2q}F+h'u_1^2+2(h-\lambda)u_1u_{1,x}
 -u_{1,x}u_2-u_1u_{2,x}\right\}.
\]
Each quadrature was normalized by $y_A(x_0)=y_B(x_0)=1$. The resulting
values were
\[
\begin{aligned}
 y_A'(x_0)&=1.9910828788+0.2520846155i,\\
 y_B'(x_0)&=2.0073638705+1.0159448467i,
\end{aligned}
\]
and hence
\[
\begin{aligned}
 W[y_A,y_B](x_0)&=0.0162809918+0.7638602311i,\\
 |W[y_A,y_B](x_0)|&=0.7640337188.
\end{aligned}
\]
The ratio of the Wronskian modulus to the maximum grid value of
$|y_Ay_B'|+|y_A'y_B|$ was $0.13049$, so the nonzero value is not caused by
loss of precision in subtracting nearly equal quantities. The maximum
relative variation of $W(x)/W(x_0)$ over the grid was
$8.15\cdot10^{-9}$. The relative residuals of $y''-U_{IV}y=0$ for $y_A$
and $y_B$ were $9.84\cdot10^{-8}$ and $5.53\cdot10^{-8}$, respectively.
Thus, two independent quadratures simultaneously verify the equation and the
condition $W[y_A,y_B]\ne0$. For the original linearization,
\[
 W[v_A,v_B](x_0)=q(x_0)W[y_A,y_B](x_0)
 =-0.1260846836+0.8968098589i\ne0.
\]

\begin{figure}[htbp]
\centering
\includegraphics[width=\textwidth]{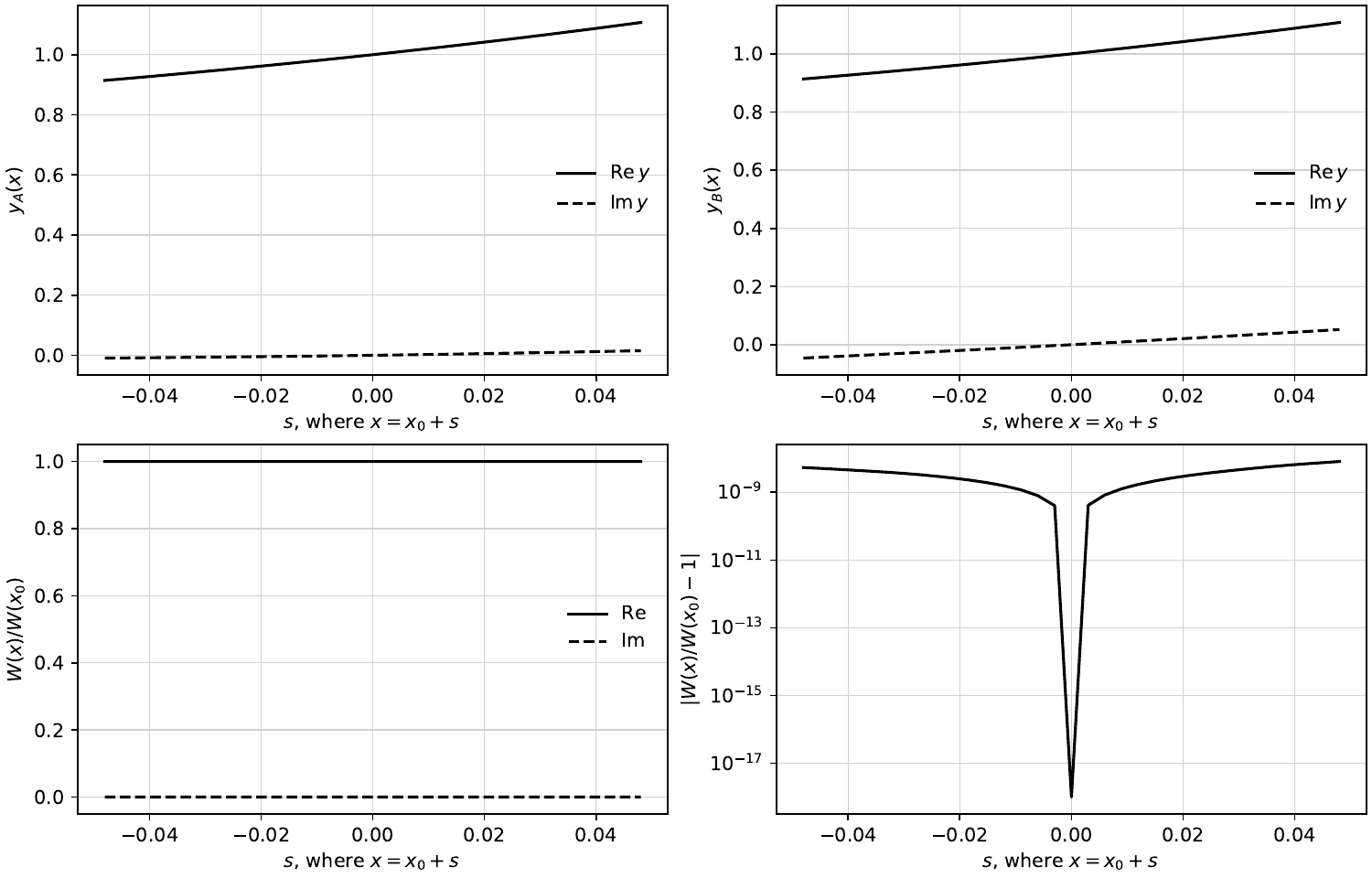}
\caption{Numerical verification of two contour solutions. The upper panels
show the normalized solutions $y_A$ and $y_B$. The lower left panel shows the
real and imaginary parts of $W(x)/W(x_0)$, and the lower right panel shows
the relative deviation of the Wronskian from its constant value. All
quantities were evaluated directly by contour quadrature, without integrating
the target linearized equation.}
\label{figP4TwoContourWronskianArticle}
\end{figure}
\FloatBarrier

\section{Variation of the monodromy data}

Consider an infinitesimal variation $q\mapsto q+\delta q$,
$p\mapsto p+\delta p$ of a solution of the fourth Painlev\'e equation. The
parameters $\theta_0,\theta_\infty$ are held fixed. In the notation of
(\ref{eqP4hArticle}),
\begin{equation}
    \delta h=2\delta p-\frac12\delta q.
    \label{eqP4DeltahArticle}
\end{equation}
The variation of the spectral part of the Lax pair is
\begin{equation}
    \delta A_{IV}
    =
    \begin{pmatrix}
        \delta a & \delta b\\
        \delta c & -\delta a
    \end{pmatrix},
    \label{eqP4DeltaAabcArticle}
\end{equation}
where
\begin{eqnarray}
    \delta a
    &=&
    \frac{h\,\delta q+q\,\delta h}{2\lambda},
    \label{eqP4DeltaaArticle}
    \\
    \delta b
    &=&
    -\frac{\delta q}{2\lambda},
    \label{eqP4DeltabArticle}
    \\
    \delta c
    &=&
    -(h\,\delta q+q\,\delta h)
    +
    \frac{
        \left(h^2+\frac{4\theta_0^2}{q^2}\right)\delta q
        +2qh\,\delta h
    }{2\lambda}.
    \label{eqP4DeltacArticle}
\end{eqnarray}
The variation of a fundamental solution satisfies
\[
    \frac{\partial \delta\Psi}{\partial\lambda}
    =
    A_{IV}\delta\Psi+\delta A_{IV}\Psi .
\]
Variation of constants gives the integral formula
\begin{equation}
    \delta\Psi
    =
    \Psi C_\delta
    +
    \Psi\int \Psi^{-1}\delta A_{IV}\Psi\,d\lambda .
    \label{eqP4DeltaPsiArticle}
\end{equation}
The matrix $C_\delta$ is independent of $\lambda$. Formula
(\ref{eqP4DeltaPsiArticle}) integrates a matrix-valued analytic one-form.
Fix a simply connected domain
\[
    \mathfrak D\subset\mathbb C\setminus\{0\},
\]
that does not cross the selected cuts of the Riemann--Hilbert problem, and
fix $\lambda_*\in\mathfrak D$. For $\lambda\in\mathfrak D$, set
\begin{equation}
    \mathcal I(\lambda;\lambda_*)
    =
    \int_{\lambda_*}^{\lambda}
    \Psi^{-1}(\mu)\delta A_{IV}(\mu)\Psi(\mu)\,d\mu ,
    \label{eqP4FiniteIntegralDefinitionArticle}
\end{equation}
where the integration path lies in $\mathfrak D$. Since the integrand is
analytic in $\mathfrak D$, the value of
(\ref{eqP4FiniteIntegralDefinitionArticle}) is invariant under deformations
with fixed endpoints. Formula (\ref{eqP4DeltaPsiArticle}) means that
\[
    \delta\Psi(\lambda)
    =
    \Psi(\lambda)C_\delta+\Psi(\lambda)\mathcal I(\lambda;\lambda_*),
\]
where $C_\delta=\Psi(\lambda_*)^{-1}\delta\Psi(\lambda_*)$ specifies
the normalization of the variation.

Under normalization at infinity, the entries of the matrix antiderivative
contain the exponentials $e^{\pm2\Theta_{IV}}$ and are integrated over the
different progressive tails defined in Section~5. For
\[
    Y_k(\lambda)=\Psi_k^{-1}(\lambda)\delta\Psi_k(\lambda)
\]
we therefore define, componentwise,
\begin{equation}
    J_{k,ij}(\lambda_*)
    :=
    \int_{\Gamma_k^{ij}(\lambda_*)}
    \left(\Psi_k^{-1}\delta A_{IV}\Psi_k\right)_{ij}\,d\lambda,
    \qquad
    J_k=(J_{k,ij})_{i,j=1}^2 .
    \label{eqP4InfinityIntegralDefinitionArticle}
\end{equation}
Each $\Gamma_k^{ij}(\lambda_*)$ is oriented from large $|\lambda|$ along
an admissible asymptotic ray toward the common finite point $\lambda_*$. The
ray is selected by the condition $Y_{k,ij}\to0$; then
$J_{k,ij}(\lambda_*)=Y_{k,ij}(\lambda_*)$.

Introduce the notation
\begin{equation}
    \mathcal V_k
    =
    \Psi_k^{-1}\delta A_{IV}\Psi_k
    =
    \bigl(\mathcal V_{k,ij}\bigr)_{i,j=1}^2 .
    \label{eqP4IkArticle}
\end{equation}
When $\det\Psi_k=1$, all entries of the matrix integrand are explicit:
\begin{equation}
\begin{aligned}
 \mathcal V_{k,11}
 &=\delta a(\Psi_{k,22}\Psi_{k,11}
             +\Psi_{k,12}\Psi_{k,21})
   +\delta b\,\Psi_{k,22}\Psi_{k,21}
   -\delta c\,\Psi_{k,12}\Psi_{k,11},\\
 \mathcal V_{k,12}
 &=\delta b\,\Psi_{k,22}^{\,2}
   -\delta c\,\Psi_{k,12}^{\,2}
   +2\delta a\,\Psi_{k,12}\Psi_{k,22},\\
 \mathcal V_{k,21}
 &=\delta c\,\Psi_{k,11}^{\,2}
   -\delta b\,\Psi_{k,21}^{\,2}
   -2\delta a\,\Psi_{k,11}\Psi_{k,21},\\
 \mathcal V_{k,22}&=-\mathcal V_{k,11}.
\end{aligned}
    \label{eqP4IkLimitDefinitionArticle}
\end{equation}

Varying $\Psi_{k+1}=\Psi_kS_k$ at $\lambda_*$ gives
\begin{equation}
    \delta S_k
    =
    S_kJ_{k+1}(\lambda_*)-J_k(\lambda_*)S_k .
    \label{eqP4DeltaSkMatrixArticle}
\end{equation}
For the constant Stokes matrices $\widetilde S_k$ of the joint solutions,
the corresponding variation is
\[
 \delta\widetilde S_k
 =\mathsf D^{-1}(\delta S_k)\mathsf D
   +[\widetilde S_k,\ell\sigma_3],\qquad
 \ell=\delta\log\mathfrak d=\frac12\int_{x_0}^x\delta q(t)\,dt.
\]
The reference point and the normalization $\mathfrak d(x_0)=1$ are held fixed.
Thus $\ell(x_0)=0$ and the spectral formulas evaluated at $x_0$ give
the variations of the constant monodromy coordinates directly.
The fundamental solution is normalized by $\det\Psi=1$. Since $S_1,S_3$
are lower triangular and $S_2,S_4$ are upper triangular,
\begin{equation}
\begin{aligned}
 \delta s_k={}&
 s_kJ_{k+1,11}+J_{k+1,21}
 -J_{k,21}-s_kJ_{k,22},
 \qquad k=1,3,
\end{aligned}
    \label{eqP4DeltaskLowerArticle}
\end{equation}
For the upper-triangular matrices, respectively,
\begin{equation}
\begin{aligned}
 \delta s_k={}&
 J_{k+1,12}+s_kJ_{k+1,22}
 -s_kJ_{k,11}-J_{k,12},
 \qquad k=2,4.
\end{aligned}
    \label{eqP4DeltaskUpperArticle}
\end{equation}
Here every $J_{j,ab}=J_{j,ab}(\lambda_*)$ is defined by the convergent
integrals (\ref{eqP4InfinityIntegralDefinitionArticle}) of the explicit
entries (\ref{eqP4IkLimitDefinitionArticle}).

Convergence of the integrals in (\ref{eqP4DeltaskLowerArticle}) and
(\ref{eqP4DeltaskUpperArticle}) follows from
\[
    \Theta_{IV}(\lambda,x)
    =
    \frac{\lambda^2}{2}+x\lambda-\theta_\infty\log\lambda,
\]
Away from the Stokes rays, the sign of $\operatorname{Re}\lambda^2$ therefore
determines which of the exponentials $e^{\pm2\Theta_{IV}}$ decays. The
canonical asymptotics show that $\mathcal V_{k,12}$ contains the factor
$e^{-2\Theta_{IV}}$; hence its tail $\Gamma_k^{12}$ is chosen in a component
where $\operatorname{Re}\Theta_{IV}\to+\infty$. Similarly,
$\mathcal V_{k,21}$ contains $e^{2\Theta_{IV}}$, and its tail
$\Gamma_k^{21}$ is chosen in a component where
$\operatorname{Re}\Theta_{IV}\to-\infty$. On closed subsectors, the
corresponding integrands are bounded by
$C|\lambda|^Ne^{-2c|\lambda|^2}$ with $C,c>0$ and $N\ge0$. Since the
exponents $\theta_0,\theta_\infty$ are fixed, substitution of the formal
series shows that the diagonal entries are $O(\lambda^{-2})$ and are also
absolutely integrable.

Formulas (\ref{eqP4DeltaskLowerArticle}) and
(\ref{eqP4DeltaskUpperArticle}) use the sectorial antiderivatives $J_k$ and
$J_{k+1}$ associated with different canonical solutions. A canonical column
that is subdominant at the first end generally has a dominant component at
the second, so the corresponding two-ended integral diverges. If two tails
lie in the same component, the analytic integral is zero by the
Cauchy-obstruction lemma.

For the connection matrix work at a nonzero comparison point $\lambda_*$
and assume $2\theta_0\notin\mathbb Z$. Choose the analytic eigenbasis
\[
 G_{00}=\begin{pmatrix}
 1&q/(4\theta_0)\\
 h-2\theta_0/q&qh/(4\theta_0)+1/2
 \end{pmatrix},\qquad \det G_{00}=1.
\]
It satisfies $A_{-1}G_{00}=G_{00}T$, where
$T=\theta_0\sigma_3$. Set
\[
 \Psi^{(0)}=G_0(\lambda)\lambda^T,\qquad
 G_0(\lambda)=\sum_{n=0}^\infty G_{0n}\lambda^n.
\]
The convergent Frobenius series and its variation are determined by
\[
 A_{-1}G_{00}=G_{00}T,\qquad
 (nI-A_{-1})G_{0n}+G_{0n}T
 =A_0G_{0,n-1}+\sigma_3G_{0,n-2}\quad(n\ge1),
\]
with $G_{0,-1}=0$. If
$\mathcal A_n(X)=(nI-A_{-1})X+XT$ and
$t_1=\theta_0$, $t_2=-\theta_0$, its inverse is explicitly
\[
 \mathcal A_n^{-1}(Y)=G_{00}
 \left(\frac{(G_{00}^{-1}Y)_{ij}}{n-t_i+t_j}\right)_{i,j=1}^2.
\]
Nonresonance makes every denominator nonzero for $n\ge1$.
At fixed exponents the initial variation is
\[
 \delta G_{00}=\begin{pmatrix}
 0&\delta q/(4\theta_0)\\
 \delta h+2\theta_0\delta q/q^2&
 (h\delta q+q\delta h)/(4\theta_0)
 \end{pmatrix},
\]
and successive coefficients satisfy
\[
 \mathcal A_n(\delta G_{0n})
 =\delta A_{-1}G_{0n}+\delta A_0G_{0,n-1}
  +A_0\delta G_{0,n-1}+\sigma_3\delta G_{0,n-2}.
\]
Here $\delta A_0$ and $\delta A_{-1}$ are respectively the constant
and residue coefficients of (\ref{eqP4DeltaAabcArticle}).
The Frobenius series converges normally in a sufficiently small disk,
locally uniformly in the initial data; its termwise variation therefore
converges there as well. At a nonzero point $\lambda_a$ in this disk set
\[
 Y_0(\lambda_a)=
 \lambda_a^{-T}G_0(\lambda_a)^{-1}\delta G_0(\lambda_a)\lambda_a^T.
\]
Continuation to $\lambda_*$ is given by the finite integral
\[
 Y_0(\lambda_*)=Y_0(\lambda_a)+
 \int_{\lambda_a}^{\lambda_*}
 (\Psi^{(0)})^{-1}\delta A_{IV}\Psi^{(0)}\,d\lambda.
\]
The path is on the selected lift and avoids zero. In general $Y_0$ does
not vanish at zero, because off-diagonal entries contain
$\lambda^{\pm2\theta_0}$.
Differentiating $\Psi_k=\Psi^{(0)}C_k$ gives
\begin{equation}
 \delta C_k=C_kJ_k(\lambda_*)-Y_0(\lambda_*)C_k.
 \label{eqP4DeltaConnectionArticle}
\end{equation}
This expression is independent of the finite comparison point.
For the local monodromy
\[
    M_0=C_1^{-1}e^{2\pi i\theta_0\sigma_3}C_1
\]
this gives
\begin{equation}
    \delta M_0
    =
    [M_0,\ C_1^{-1}\delta C_1].
    \label{eqP4DeltaM0Article}
\end{equation}
Similarly,
\[
    P=S_1S_2S_3S_4,\qquad M_\infty=F_\infty^{-1}P^{-1}
\]
gives
\begin{equation}
    \delta M_\infty
    =
    -M_\infty\left(\sum_{j=1}^{4}
    S_1\cdots S_{j-1}\,\delta S_j\,S_{j+1}\cdots S_4\right)P^{-1}.
    \label{eqP4DeltaMinftyArticle}
\end{equation}
If two canonically normalized solutions are related by $\Psi_+=\Psi_-S$,
set $J_\pm(\lambda_*)=\Psi_\pm^{-1}\delta\Psi_\pm(\lambda_*)$ with the
corresponding normalizations. The variation of the transition matrix is
\begin{equation}
    \delta S
    =
    SJ_+(\lambda_*)-J_-(\lambda_*)S .
    \label{eqP4DeltaMonodromyGeneralArticle}
\end{equation}
For a Stokes matrix, $J_+$ and $J_-$ are normalized in different Stokes
sectors; for a connection matrix, the local antiderivative has the finite initial
value determined by its normalized Frobenius series. The entries of the matrix integrand consist of squares and
pairwise products of entries of $\Psi$.

If the perturbed equation with a small parameter $\varepsilon$ is written as
\begin{equation}
    q''
    =
    \frac{(q')^2}{2q}
    +\frac32 q^3
    +4xq^2
    +2(x^2-\alpha)q
    +\frac{\beta}{q}
    +\varepsilon f(q,q',x),
    \label{eqP4PerturbedArticle}
\end{equation}
where $f(q,q',x)$ is a prescribed analytic function, then the first
correction satisfies an inhomogeneous linearized equation. After the
substitution $v=\sqrt q\,y$, its right-hand side is $f/\sqrt q$, and the
Green formula is constructed from solutions of
(\ref{eqLinP4SelfAdjointArticle}) represented by integrals of $Q_{IV}$.

\section{Asymptotic interpretation}

In the hierarchy of \cite{KiselevSuleimanov1999}, the Hamiltonian $H_{IV}$
describes the asymptotic dynamics of a solution, and $K_{IV}$ is its tangent
lift. The isomonodromic form of this construction is described by derivatives
of the solution with respect to the monodromy data and by integrals of
quadratic expressions from the Lax pair. The irregular singularity at
$\lambda=\infty$ determines the Stokes sectors, whereas the Fuchsian point
$\lambda=0$ determines transitions between sheets and the local monodromy.
These data enter both the integral solution
(\ref{eqP4LinearizedIntegralAnsatzArticle}) and the variational formulas
(\ref{eqP4DeltaskLowerArticle})--(\ref{eqP4DeltaM0Article}).

\section{Conclusion}

The scalar linearization has been reduced to the normal form
(\ref{eqLinP4SelfAdjointArticle}). The identity
(\ref{eqP4SquaredEigenfunctionIdentityArticle}) and a rapid-decay cycle give
a convergent integral solution with a vanishing boundary term; its numerical
implementation has relative residual $1.28\cdot10^{-7}$. A different cycle
class gives a second integral solution, and a nonzero Wronskian at one point
produces a fundamental pair of solutions. The Wronskian is holomorphic in the monodromy
data, so one nonzero value implies generic nondegeneracy; direct quadrature
provides numerical evidence for this condition for the selected complex data.
Canonical fundamental solutions of the associated linear system are defined
by Volterra equations, while the Stokes multipliers,
the connection matrix, and their variations are expressed by convergent
integrals.

\section*{Statements and declarations}

\paragraph{Conflict of interest.}
The author declares no conflict of interest.

\paragraph{Data availability.}
The article reports the numerical parameters and summary results.
The numerical verification code and machine-readable output are available
from the corresponding author upon reasonable request.

\section*{Acknowledgments}
The author received no specific funding for this work.

\end{document}